\newcommand{\cld}{\mathsf{child}}
\newcommand{\pnd}{\mathsf{pendant}}
\newcommand{\Oh}{\mathcal{O}}
\begin{document}
\title{Kernelization for Finding Lineal Topologies (Depth-First Spanning Trees) with Many or Few Leaves\thanks{The research leading to these results has received funding from the Research Council of Norway via the
projects (PCPC) (grant no. 274526) and BWCA (grant no. 314528).
}}
\titlerunning{Kernelization for Finding Lineal Topologies with Many or Few Leaves}
%
\author{Emmanuel Sam\inst{1}\orcidID{0000-0001-7756-0901} \and
Benjamin Bergougnoux\inst{2}\orcidID{0000-0002-6270-3663} \and
Petr A. Golovach \inst{1}\orcidID{0000-0002-2619-2990} \and 
Nello Blaser \inst{1}\orcidID{0000-0001-9489-1657}}
\authorrunning{E. Sam, B. Bergougnoux, P. Golovach, N. Blaser}
%
\institute{Department of Informatics, University of Bergen, Norway 
\email{\{emmanuel.sam,petr.golovach, nello.blaser\}@uib.no}\\
\and
Institute of Informatics, University of Warsaw, Poland\\
\email{benjamin.bergougnoux@mimuw.edu.pl}}
\maketitle              
\begin{abstract}
For a given graph $G$, a depth-first search (DFS) tree $T$ of $G$ is an $r$-rooted spanning tree such that every edge of $G$ is either an edge of $T$ or is between a \textit{descendant} and an \textit{ancestor} in $T$.
A graph $G$ together with a DFS tree is called a \textit{lineal topology} $\mathcal{T} = (G, r, T)$.  Sam et al. (2023) initiated study of the parameterized complexity of the \textsc{Min-LLT} and \textsc{Max-LLT} problems which ask, given a graph $G$ and an integer $k\geq 0$, whether $G$ has a DFS tree with at most $k$ and at least $k$ leaves, respectively. Particularly, they showed that for the dual parameterization, where the tasks are to find DFS trees  with at least $n-k$ and at most $n-k$ leaves, respectively, these problems are fixed-parameter tractable when parameterized by $k$. However, the proofs were based on Courcelle's theorem, thereby making the running times a tower of exponentials. We prove that both problems admit polynomial kernels with $\Oh(k^3)$ vertices. In particular, this implies FPT algorithms running in  $k^{\Oh(k)}\cdot n^{O(1)}$ time.
We achieve these results by making use of a $\Oh(k)$-sized vertex cover structure associated with each problem. This also allows us to demonstrate polynomial kernels for \textsc{Min-LLT} and \textsc{Max-LLT}  for the structural parameterization by the vertex cover number.

\keywords{DFS Tree  \and Spanning Tree \and Kernelization \and Parameterized Complexity.}
\end{abstract}

\newpage
\section{Introduction}
\label{intro}
Depth-first search (DFS) is a well-known fundamental technique for visiting the vertices and exploring the edges of a graph \cite{cormen2009,tarjan1971}. For a given connected undirected graph with vertex set $V(G)$ and edge set $E(G)$, DFS explores $E(G)$ by always choosing an edge incident to the most recently discovered vertex that still has unexplored edges. A selected edge, either leads to a new vertex or a vertex already discovered by the search. The set of edges that lead to a new vertex during the DFS define an $r$-rooted spanning tree $T$ of $G$, called a \textit{depth-first spanning} (DFS) tree, where $r$ is the vertex from which the search started. This tree $T$ has the property that each edge that is not in $T$ connects an ancestor and a descendant of $T$. 
All rooted spanning trees of a finite graph with this property, irrespective of how they are computed, such as a \textit{Hamiltonian path}, are generalized as \textit{tr\'{e}maux trees} \cite{DEFRAYSSEIX2012279}. 
Given a graph $G$ and a DFS tree $T$ rooted at a vertex $r \in V(G)$, it is easy to see that the family $\mathcal{T}$ of subsets of $E(G)$ induced by the vertices in all subtrees of $T$ with the same root $r$ as $T$ constitute a topology on $E(G)$. For this reason, the triple $(G, T, r)$ has been referred to as the \textit{lineal topology} (LT) of $G$ in \cite{Sam2023a}. 
Many existing applications of DFS and DFS trees --- such as planarity testing and embedding \cite{de2008tremaux,hopcroft1974efficient}, finding connected and biconnected components of undirected graphs \cite{HopcroftTarjan1973}, bipartite matching \cite{HopcroftK73}, and graph layout \cite{BIEDL2001167} --- only require one to find an arbitrary DFS tree of the given graph, which can be done in time $O(n + m)$, where $n$ and $m$ are the number of vertices and edges of the graph.

An application of a DFS tree, noted by  Fellows et al. \cite{fellows1988}, that calls for a DFS tree  with minimum height is the use of DFS trees to structure the search space of backtracking algorithms for solving \textit{constraint satisfaction problems} \cite{freuder1985}. 
This motivated the authors to study the complexity of finding DFS trees of a graph $G$ that optimize or near-optimize the maximum length or minimum length of the root-to-leaf paths in the DFS trees of $G$.
They showed that the related decision problems are NP-complete and do not admit a polynomial-time absolute approximation algorithm unless P = NP. 

In this paper, we look at the \textsc{Minimum Leafy LT} (\textsc{Min-LLT}) and \textsc{Maximum Leafy LT} (\textsc{Max-LLT}) problems introduced by Sam et al. \cite{Sam2023a}. Given a graph $G$ and an integer $k\geq 0$, \textsc{Min-LLT} and \textsc{Max-LLT} ask whether $G$ has a DFS tree with at most $k$ and at least $k$ leaves, respectively. These two problems are related to the well-known NP-complete \textsc{Minimum Leaf Spanning Tree (Min-LST)} and \textsc{Maximum Leaf Spanning Tree (Max-LST)} \cite{gareyJohnson1990,RAHMAN200593}.

Sam et al.  \cite{Sam2023a} proved that \textsc{Min-LLT} and \textsc{Max-LLT} are NP-hard.
Moreover, they proved that when parameterized by $k$, \textsc{Min-LLT} is para-NP-hard and \textsc{Max-LLT} is W[1]-hard.
They also considered the \say{dual} parameterizations, namely, \textsc{Dual Min}-LLT and \textsc{Dual Max}-LLT, where the tasks are to find DFS trees with at least $n-k$ and at most $n-k$ leaves, respectively.
They proved that \textsc{Dual Min}-LLT and \textsc{Dual Max}-LLT are both FPT parameterized by $k$.
These FPT algorithms are, however, based on Courcelle's theorem \cite{COURCELLE199012}, which relates the expressibility of a graph property in \textit{monadic second order} (MSO) logic to the existence of an algorithm that solves the problem in FPT-time with respect to \textit{treewidth} \cite{Nesetril2012}. As a by-product, their running times have a high exponential dependence on the treewidth and the length of the MSO formula expressing the property.

\subsection{Our Results}
\label{subsec:result}
We prove that \textsc{Min-LLT} and  \textsc{Max-LLT} admit polynomial kernels when parameterized by the \textit{vertex cover number} of the given graph. Formally, we prove the following theorem.
\begin{theorem}\label{thm:main:vc}
    \textsc{Min-LLT} and \textsc{Max-LLT} admit kernels with $\Oh(\tau^3)$ vertices when parameterized by the vertex cover number $\tau$ of the input graph. 
\end{theorem}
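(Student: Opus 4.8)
The plan is to fix a vertex cover $C$ with $|C| = \tau$ (computing a $2$-approximate one if none is supplied) and to let $I = V(G) \setminus C$ be the associated independent set. Since $|V(G)| = \tau + |I|$, it suffices to bound $|I|$ by $\Oh(\tau^3)$ after applying safe reduction rules; the same rules serve both \textsc{Min-LLT} and \textsc{Max-LLT}, with only the bookkeeping of the target $k$ differing between the two.

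The engine of the kernel is a pair of structural observations about any DFS tree $T$ of $G$. First, because $I$ is independent, every child of an $I$-vertex in $T$ lies in $C$, and the children-sets of distinct $I$-vertices are disjoint; hence at most $|C| = \tau$ vertices of $I$ are internal in $T$, so at least $|I| - \tau$ of them are leaves. Second, a leaf $v \in I$ has no descendants, so every neighbor of $v$ is an ancestor of $v$; consequently $N(v)$ is totally ordered by the ancestor relation, i.e.\ it forms a chain in the $C$-poset induced by $T$, with the parent of $v$ as its maximum. Turning this around, in every DFS tree all but at most $\tau$ vertices of $I$ must have a neighborhood realized as a chain.

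I would then classify the vertices of $I$ by their neighborhood $N(v) \subseteq C$ and introduce counting reduction rules: for a neighborhood type $S$ whose class $I_S$ is large, delete the surplus vertices and, for \textsc{Min-LLT}, decrease $k$ by the number of deleted vertices, which are forced to be leaves. Since at most $\tau$ members of any class can be internal, keeping $\tau + 1$ representatives per class preserves every achievable role, and the safeness proof rests on an exchange argument—mapping a DFS tree of $G$ to one of the reduced graph and back while tracking the leaf count exactly—driven by the interchangeability of equal-neighborhood vertices. The degree-$1$ and degree-$2$ classes are handled immediately: there are only $\Oh(\tau^2)$ of them, so they contribute $\Oh(\tau^3)$ retained vertices.

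The main obstacle is the potentially exponential number of distinct high-degree neighborhood types, which a naive per-type rule cannot control. To overcome this I would exploit that the only constraint a leaf imposes on the admissible $C$-posets is that the pairs of $C$-vertices co-occurring in its neighborhood be pairwise comparable; thus the relevant information carried by the independent set collapses to the $\Oh(\tau^2)$ pairwise comparability requirements together with the at most $\tau$ internal slots. Concretely, I would mark $\Oh(\tau)$ representatives for each of the $\Oh(\tau^2)$ pair patterns (so that every comparability demand that can be met is witnessed) plus $\Oh(\tau)$ vertices reserved to realize internal $I$-vertices, delete everything else, and prove that this marked set of size $\Oh(\tau^3)$ admits a DFS tree with the prescribed number of leaves if and only if $G$ does. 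The delicate point in the safeness proof is the simultaneous interaction between the chain (leaf) constraints and the bounded supply of internal slots, which is exactly where the two structural bounds above must be combined.
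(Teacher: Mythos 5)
Your reduction scheme is essentially the one the paper uses: split $V(G)$ into a $2$-approximate vertex cover $C$ and the independent set $I$, bound the number of internal vertices coming from $I$ by $|C|$, handle pendant vertices of $I$ separately, and for each of the $\Oh(\tau^2)$ pairs $\{u,v\}\subseteq C$ retain $\Oh(\tau)$ common neighbors, which bounds the surviving degree-at-least-two part of $I$ by $\Oh(\tau^3)$ since every such vertex is a common neighbor of some pair. The genuine gap is that the safeness of deleting an unmarked vertex $x\in I$ of degree at least two is only asserted; the ``delicate point'' you flag at the end is exactly the content of the theorem and is left unresolved. Three specific facts are needed, and none is routine. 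First, you must show that in \emph{every} DFS tree of $G$ and of $G-x$ the set $N_G(x)$ lies on a single root-to-leaf path. This holds because for each pair $u,v\in N_G(x)$ the rule retains more common neighbors than can possibly be internal, so at least one retained common neighbor is a leaf of the tree, and a leaf of a DFS tree is adjacent only to its ancestors, which forces $u$ and $v$ to be comparable. Note that the retained vertices do not merely ``witness that the comparability demand can be met,'' as you put it --- they \emph{enforce} comparability in every DFS tree of the reduced graph, and this enforcement is what allows $x$ to be reattached as a leaf in the backward direction of the equivalence.

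Second, in the forward direction you must convert a DFS tree of $G$ in which $x$ is internal into one with the \emph{same} number of internal vertices in which $x$ is a leaf. The paper keeps $2s$ common neighbors per pair (where $s=|C|$) precisely so that at least $s+1$ of them are leaves whose parents all lie in the size-$s$ cover; by pigeonhole two of these leaves share a parent, and promoting one of them into $x$'s tree position keeps that parent internal (it retains the sibling as a child), so the exchange preserves the internal-vertex count exactly. Your $\tau+1$ representatives per class do not support this pigeonhole. Third, whenever a leaf $x$ is deleted or reattached you must verify that its parent does not flip between internal and leaf; this again follows from the enforced structure (every vertex of $N_G(x)$ is internal in every DFS tree of $G-x$, since otherwise two leaves of a DFS tree would be adjacent). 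Without these three steps the ``exchange argument driven by interchangeability'' does not go through, because even deleting a vertex that is a leaf of the tree can change the number of internal vertices.
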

Based on these kernels, we show that \textsc{Dual Min-LLT}, and \textsc{Dual Max-LLT} admit polynomial kernels parameterized by $k$.
\begin{theorem}\label{thm:main:dual}
 \textsc{Dual Min-LLT} and \textsc{Dual Max-LLT} admit kernels with $\Oh(k^3)$ vertices.
 \end{theorem}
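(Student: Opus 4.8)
The plan is to derive Theorem~\ref{thm:main:dual} from Theorem~\ref{thm:main:vc} by connecting the dual parameter $k$ to the vertex cover number $\tau$ of the input graph $G$. The key observation is that both \textsc{Dual Min-LLT} and \textsc{Dual Max-LLT} ask about DFS trees with $n-k$ leaves (at least, respectively at most), which is equivalent to asking about DFS trees with at most $k$ (respectively at least $k$) \emph{internal} (non-leaf) vertices. So the strategy is to show that a YES-instance forces a small vertex cover: specifically, that the set of internal vertices of the relevant DFS tree, together with possibly a few extra vertices, yields a vertex cover whose size is bounded by a linear function of $k$.

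First I would establish the structural fact that in any DFS tree $T$ of $G$, the internal (non-leaf) vertices of $T$ form a vertex cover of $G$. This is immediate from the defining property of a DFS tree: every edge of $G$ joins an ancestor to a descendant, and a leaf has no descendants, so no edge of $G$ can have both endpoints being leaves (an edge between two leaves would need one to be an ancestor of the other, contradicting that both are leaves). Hence every edge is incident to at least one internal vertex, i.e. the internal vertices cover $E(G)$.

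Next I would run a standard preprocessing/branching dichotomy on $k$ versus $\tau$. For \textsc{Dual Max-LLT}, a YES-instance has a DFS tree with \emph{at most} $k$ internal vertices, so by the claim above $\tau \le k$, giving $\Oh(\tau^3) = \Oh(k^3)$ directly from Theorem~\ref{thm:main:vc} once we observe a polynomial-time algorithm (e.g.\ a $2$-approximation or a $\tau \le 2k$ bound from a maximal matching, or LP-based) to certify a vertex cover of size $\Oh(k)$. For \textsc{Dual Min-LLT}, the YES-instance has a DFS tree with at most $k$ leaves, hence $n-k$ internal vertices; here the internal vertices do \emph{not} give a small cover, so instead I would reason from the leaf side: a maximum matching $M$ of $G$ has size at most $k$ in any YES-instance, because matched vertices cannot all be leaves in a DFS tree with few leaves (each matching edge, being an ancestor-descendant edge, contributes an internal endpoint, and distinct matching edges force distinct internal vertices up to the path structure). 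This bounds $\tau \le 2|M| = \Oh(k)$ via König-type or maximal-matching arguments, and again Theorem~\ref{thm:main:vc} delivers the $\Oh(k^3)$ kernel.

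The main obstacle I anticipate is the \textsc{Dual Min-LLT} direction, where the naive ``internal vertices are a cover'' bound is useless (there are $n-k$ of them). The delicate step is to convert the combinatorial promise ``few leaves'' into a genuine $\Oh(k)$ bound on the vertex cover number; one must argue that a DFS tree with at most $k$ leaves restricts the edge structure of $G$ enough to force a small matching, or alternatively handle the case $\tau = \omega(k)$ by outputting a trivial NO-instance. Care is needed because a long DFS path has many internal vertices but covers edges cheaply, so the argument must correctly account for how leaves and matching edges interact in a tr\'emaux tree. Once the linear relation $\tau = \Oh(k)$ is secured in both directions and the equivalence between the dual-$k$ and vertex-cover-$\tau$ formulations is made precise, applying the kernel of Theorem~\ref{thm:main:vc} is routine.
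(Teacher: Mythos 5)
Your handling of \textsc{Dual Max-LLT} (the variant asking for a DFS tree with at most $k$ internal vertices) is correct and is exactly the paper's argument: the internal vertices of any DFS tree form a vertex cover, so a yes-instance has $\tau\le k$; a greedy maximal matching either certifies a vertex cover of size at most $2k$ or proves $\tau>k$ (trivial no-instance), and then the vertex-cover reduction applies. However, your treatment of \textsc{Dual Min-LLT} contains a genuine gap, compounded by a mix-up of the two problems. \textsc{Dual Min-LLT} asks for a DFS tree with at most $n-k$ leaves, i.e., at least $k$ internal vertices (not ``at most $k$ leaves,'' which is the original, non-dual \textsc{Min-LLT}). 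For this problem both of your proposed escape routes fail. First, the claim that a yes-instance has a maximum matching of size $\Oh(k)$ is false: a path on $n$ vertices has a Hamiltonian-path DFS tree with a single leaf (hence a yes-instance for every $k\le n-1$), yet its maximum matching has size $\lfloor n/2\rfloor$ and its vertex cover number is $\Theta(n)$. Your intermediate reasoning also points the inequality the wrong way: each matching edge having an internal endpoint yields $|M|\le(\text{number of internal vertices})$, which for a tree with at least $n-k$ internal vertices gives no bound at all. Second, ``handle the case $\tau=\omega(k)$ by outputting a trivial NO-instance'' has the wrong sign: since the internal vertices of \emph{every} DFS tree form a vertex cover, every DFS tree has at least $\tau$ internal vertices, so a large vertex cover makes \textsc{Dual Min-LLT} a trivial \emph{yes}-instance.

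The paper's fix is short: run DFS once from an arbitrary root; if the resulting tree already has at least $k$ internal vertices, return a trivial yes-instance; otherwise its internal vertices form a vertex cover of size at most $k-1$, and the reduction of Lemma~\ref{lem:vc} (which preserves the exact achievable numbers of internal vertices, so the parameter need not be shifted) yields the $\Oh(k^3)$-vertex kernel. With that replacement, your win-win outline goes through.
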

This last result follows from a win-win situation as either (1)~the input graph has a large \textit{vertex cover} in terms of $k$ and, consequently, both problems are trivially solvable or (2)~the input graph has a small vertex cover, and we can use Theorem~\ref{thm:main:vc}.
Finally, we use our polynomial kernels to prove that \textsc{Dual Min-LLT}, and \textsc{Dual Max-LLT} admit 
FPT algorithms parameterized by $k$ with low exponential dependency. 
\begin{theorem}\label{thm:main:FPT}
    \textsc{Dual Min-LLT} and \textsc{Dual Max-LLT} can be solved in $k^{\Oh(k)}\cdot n^{\Oh(1)}$ time.  
 \end{theorem}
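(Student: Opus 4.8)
The plan is to first invoke Theorem~\ref{thm:main:dual}: applying the kernelization in polynomial time replaces the input by an equivalent instance $(G',k)$ with $|V(G')| = \Oh(k^3)$ that, by its construction, comes equipped with a vertex cover $C$ of size $\tau = \Oh(k)$. Writing $I = V(G') \setminus C$ for the resulting independent set, it then suffices to solve each dual problem on $G'$ in time $\tau^{\Oh(\tau)} \cdot |V(G')|^{\Oh(1)} = k^{\Oh(k)}$. (Equivalently, one runs the underlying win--win directly: if $\tau$ is large the instance is trivial, and otherwise $\tau = \Oh(k)$.) The engine is therefore an algorithm that solves \textsc{Dual Min-LLT} and \textsc{Dual Max-LLT} in time $\tau^{\Oh(\tau)} \cdot n^{\Oh(1)}$ on any graph given together with a vertex cover of size $\tau$.

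The structural observation driving this algorithm concerns the position of $I$-vertices in any DFS tree $T$ of $G'$. Since $I$ is independent, every tree edge incident to a $v \in I$ has its other endpoint in $C$; hence the parent of $v$ (if any) lies in $C$, and every child of $v$ lies in $C$. Mapping each \emph{internal} $I$-vertex to one of its children in $C$ is injective, since two such vertices mapping to the same child would give that child two parents, so $T$ has at most $|C| = \tau$ internal $I$-vertices. Consequently the \emph{skeleton} $S = C \cup J$, where $J$ is the set of internal $I$-vertices, has at most $2\tau$ vertices, and $T$ is obtained from the rooted subtree induced on $S$ by attaching every remaining $I$-vertex as a leaf hanging below some vertex of $C$.

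First I would enumerate all candidate skeletons. A skeleton is specified by a choice of $J \subseteq I$ with $|J| \le \tau$ (at most $|I|^{\tau} = (\Oh(k^3))^{\Oh(k)}$ choices) together with a rooted tree structure on the $\le 2\tau$ labelled vertices of $S$ (at most $(2\tau)^{\Oh(\tau)}$ choices), giving $k^{\Oh(k)}$ skeletons in total. For each skeleton I would first verify that it is a valid partial DFS configuration: as $I$ is independent, the only edges to test are those inside $C$ and those from a vertex of $J$ to its $C$-neighbours, and each must join an ancestor--descendant pair in the rooted skeleton. Then I would place the remaining ("leaf'') $I$-vertices. Because such a $v$ becomes a leaf, all of $N(v)$ must be ancestors of $v$, so $v$ can be hung below $c \in C$ exactly when $N(v)$ lies on the root-to-$c$ path of the skeleton; the set of admissible parents is non-empty precisely when $N(v)$ forms a chain, in which case it consists of the deepest neighbour of $v$ together with that neighbour's $C$-descendants. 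Every vertex of $I \setminus J$ must be placed and each is always a leaf, so the only freedom affecting the leaf count is whether an attachment converts a childless $C$-vertex of the skeleton into an internal vertex; maximising (respectively minimising) the number of such $C$-vertices that stay childless is a polynomial-time bipartite matching problem. Computing, per skeleton, the maximum achievable number of leaves for \textsc{Dual Min-LLT} and the minimum for \textsc{Dual Max-LLT}, and comparing with $n-k$, then settles the instance; we accept if some skeleton meets the required bound.

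The main obstacle is the correctness of this reduction rather than its running time. I must prove that the skeleton enumeration is exhaustive and essentially non-redundant --- that every DFS tree of $G'$ arises from an enumerated skeleton via a valid leaf-attachment --- and that the local attachment test ($N(v)$ contained in a single root-to-$c$ path) is both necessary and sufficient for the global tr\'emaux property, so that re-attaching leaves can never violate an ancestor--descendant constraint among the already-fixed skeleton edges. Establishing that the leaf-count optimisation is captured exactly by the matching formulation, and treating the boundary case where the root itself lies in $I$, are the remaining points that require care.
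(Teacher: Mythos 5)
Your overall route is the same as the paper's: kernelize via Theorem~\ref{thm:main:dual} to an equivalent instance with $\Oh(k^3)$ vertices and an $\Oh(k)$-size vertex cover, then brute-force over a bounded-size ``core'' of the sought DFS tree and extend it in polynomial time. (The paper's core is an ordered $k$-tuple of prospective internal vertices, checked and extended via Observations~\ref{obs:order}, \ref{obs:ext-int} and~\ref{obs:ext-leaves}, giving the $n^{\Oh(k)}$ algorithm of Lemma~\ref{lem:XP}; yours is the vertex cover $C$ plus the set $J$ of internal independent-set vertices with a guessed tree structure. Both enumerations have size $k^{\Oh(k)}$ after kernelization, and your skeleton enumeration is indeed exhaustive: every ancestor in a rooted tree is internal, so $C\cup J$ is ancestor-closed and $T[C\cup J]$ really is a rooted tree inheriting the ancestor relation of $T$.)

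However, the leaf-attachment step contains a genuine error. You claim that a leaf $v\in I\setminus J$ admits as parent ``the deepest neighbour of $v$ together with that neighbour's $C$-descendants.'' The parent of $v$ in a spanning tree is joined to $v$ by a tree edge, and tree edges must be edges of the graph; since $N(v)\subseteq C$ and all of $N(v)$ must be ancestors of the leaf $v$, the parent of $v$ is forced to be the unique deepest vertex of $N(v)$ in the skeleton. A $C$-descendant of that vertex not adjacent to $v$ is not an admissible parent, and one adjacent to $v$ would itself be the deepest neighbour. Hence the ``freedom'' you optimize over does not exist: the bipartite matching subroutine ranges over objects that are not subgraphs of $G'$ and can, for instance, credit a childless $C$-vertex as internal when no actual neighbour can be hung from it, yielding false positives for \textsc{Dual Min-LLT}; incidentally, the stated optimization directions are also reversed (for \textsc{Dual Min-LLT} one needs the minimum, not maximum, achievable number of leaves). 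The repair only simplifies the algorithm: each valid skeleton determines a unique candidate DFS tree by forced attachment, whose leaf count is computed directly; with that fix, together with the extendability criterion of Observation~\ref{obs:ext-leaves} in place of your deferred ``the local test is sufficient'' step and the easy observation that the root, having descendants, always lies in $C\cup J$, the argument goes through and gives the claimed $k^{\Oh(k)}\cdot n^{\Oh(1)}$ bound.
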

As the previously known FPT algorithm for each of these problems was based on Courcelle's theorem, our algorithms are the first FPT-algorithms constructed explicitly.

\subsection{Related Results}
Lu and Ravi \cite{Lu96thepower} proved that the \textsc{Min-LST}, problem has no constant factor approximation unless $P = NP$. From a parameterization point of view, Prieto et al.\cite{elenaP2003} showed that this problem is $W[P]$-hard parameterized by the solution size $k$. The \textsc{Max-LST} problem is, however, FPT parameterized by $k$ and has been studied extensively \cite{bonsma2003faster,bonsma2008,vladimirEst2005,fellows1992well,michael2000coordinatized}.

\textsc{Dual Min-LLT} is related to the well-studied $k$-\textsc{Internal Spanning Tree} problem \cite{fomin2009MaxInt,elenaP2003}, which asks to decide whether a given graph admits a spanning tree with at most $n-k$ leaves (or at least $k$ internal vertices). Prieto et al.\cite{elenaP2003} were the first to show that the natural parameterized version of $k$-\textsc{Internal Spanning Tree} has a $\Oh^*(2^{k\log{}k})$-time FPT algorithm and a $\Oh(k^3)$-vertex kernel. Later, the kernel was improved to $\Oh(k^2)$, $\Oh(3k)$, and $\Oh(2k)$ by Prieto et al., Fomin et al.\cite{fomin2009MaxInt}, and Li et al. \cite{wenjunLi2017} respectively. The latter authors also gave what is now the fastest FPT algorithm for $k$-\textsc{Internal Spanning Tree}, which runs in $\Oh^*(4^k)$ time. 

An \textit{independency tree} (IT) is a variant of a spanning tree whose leaves correspond to an independent set in the given graph. Given a connected graph on $n \ge 3$, $G$ has no IT if it has no DFS tree in  which the leaves and the root are pairwise nonadjacent in $G$ \cite{BOHME1997219}. 
From a parameterization point of view, the \textsc{Min Leaf IT (Internal)} and \textsc{Max Leaf IT (Internal)} problems \cite{CASEL20202}, which ask, given a graph $G$ and an integer $k \ge 0$, whether  $G$ has an IT with at least $k$ and at most $k$ internal vertices, respectively, are related to \textsc{Dual Min-LLT} and \textsc{Dual Max-LLT}, respectively.  Casel et al. \cite{CASEL20202} showed that, when parameterized by $k$, \textsc{Min Leaf IT (Internal)} has an $\Oh^*(4^{k})$-time algorithm and a $2k$ vertex kernel. They also proved that \textsc{Max Leaf IT (Internal)} parameterized by $k$ has a $\Oh^*(18^{k})$-time algorithm and a $\Oh(k2^k)$-vertex kernel, but no polynomial kernel unless the polynomial hierarchy collapses to the third level. Their techniques, however, do not consider the properties of a DFS tree and, therefore, do not work for our problems.

\subsection{Organization of the paper}

Section \ref{prelims} contains basic terminologies relevant to graphs, DFS trees, and parameterized complexity necessary to understand the paper.  In section \ref{sectn:kernel}, we first prove a lemma about how, given a graph $G$ and a vertex cover of $G$, the internal vertices of any spanning tree of $G$ relate to  the given vertex cover. We then use this lemma to demonstrate a polynomial kernel for \textsc{Min-LLT} and \textsc{Max}-LLT for the structural parameterization by the vertex cover number of the graph. This is followed by the kernelization algorithms for \textsc{Dual Min-LLT} and \textsc{Dual Max-LLT} parameterized by $k$. In section \ref{sectn:fpt_algorithms}, we devise FPT algorithms for \textsc{Dual Min-LLT} and \textsc{Dual Max-LLT} based on their polynomial kernels. Finally, we conclude the paper in section \ref{concln} with remarks concerning future studies.
\section{Preliminaries}
\label{prelims}
We consider only simple finite graphs. We use $V(G)$ and $E(G)$ to denote the sets of vertices and edges, respectively, of a graph $G$.
For a graph $G$, we denote the number of vertices $|V(G)|$ and the number of edges $|E(G)|$ of $G$ by $n$ and $m$, respectively, if this does not create confusion. 
  For any vertex $v \in V(G)$, the set $N_G(v)$ denotes the neighbors of $v$ in $G$ and $N_{G}[v]$ denotes its \textit{closed neighborhood} $N_G(v) \cup \{v\}$ in $G$. 
 For a set of vertices $X\subseteq V$, $N_G(X)=\big(\bigcup_{v\in X}N_G(v)\big)\setminus X$.
We omit the $G$ in the subscript if the graph is clear from the context. 
For a vertex $v$, its \emph{degree} is $d_G(v)=|N_G(v)|$.
Given any two graphs $G_1=(V_1, E_1)$ and $G_2=(V_2, E_2)$, if $V_1 \subseteq V_2$ and $E_1 \subseteq E_2$ then $G_1$ is a \textit{subgraph} of $G_2$, denoted by $G_1 \subseteq G_2$.
If $G_1$ contains all the edges $uv \in E_2$ with $u,v \in V_1$, then we say $G_1$ is an \textit{induced subgraph} of $G_2$, or $V_1$ induces $G_1$ in $G_2$, denoted by $G[V_1]$.
If $G_1$ is such that it contains every vertex of $G_2$, i.e., if $V_1 = V_2$ then $G_1$ is a \textit{spanning subgraph} of $G_2$.
Given a set of vertices $X \subseteq V(G)$, we express the induced subgraph $G[V(G)\setminus X]$ as $G-X$. If $X=\{x\}$, we write $V(G)\setminus x$ instead of $V(G)\setminus \{x\}$ and $G-x$ instead of $G-\{x\}$. Given a graph $G$, a set of vertices $S \subseteq V(G)$ is a \textit{vertex cover} of $G$ if, for every edge $uv \in E(G)$, either $u \in S$ or $v \in S$; the \emph{vertex cover number} of $G$, denoted by $\tau(G)$, is the minimum size of a vertex cover. A set $Y \subseteq V(G)$ is called an \textit{independent set}, if for every vertex pair $u, v \in Y$, $uv \notin E(G)$. A \textit{matching} $M$ in a given graph $G$ is a set of edges, no two of which share common vertices. 
A \textit{pendant vertex} is a vertex with degree one.

For definitions of basic tree terminologies including root, child, parent, ancestor, and descendant, we refer the reader to  \cite{DiestelGraphTheory}. Given a graph $G$, we denote a spanning tree of $G$ rooted at a vertex $r \in V(G)$ by $(T,r)$. When there is no ambiguity, we simply use $T$ instead of $(T,r)$. For a rooted tree $T$, a vertex $v$ is a \emph{leaf} if it has no descendants and $v$ is an \emph{internal} vertex if otherwise. 
A spanning tree $T$ with a root $r$ is a \emph{DFS tree rooted in $r$} if for very every edge $uv\in E(G)$, either $uv\in E(T)$, or $v$ is a descendant  of $u$ in $T$, or $u$ is a descendant  of $v$ in $T$. Equivalently, 
$T$ is a DFS tree if it can be produced by the classical depth-first search (DFS) algorithm~\cite{cormen2009}.  We say that a path $P$ in a rooted tree $T$ is a \emph{root-to-leaf} path if one of its end-vertices is the root and the other is a leaf of $T$.

\medskip
Now we review some important concepts of Parameterized complexity (PC) relevant to the work reported herein. For more details about PC, we refer the reader to \cite{cyganPC,downeyFellow2013}. 

\begin{definition}[Parameterized problem]
Let $\Sigma$ be a fixed finite alphabet. A \textit{parameterized problem} is a language $P \subseteq \Sigma^{\ast} \times \N$. Given an instance $(x,k) \in \Sigma^{\ast} \times \N$ of a parameterized problem, $k \in \N$ is called the \textit{parameter}, and the task is to determine whether $(x,k)$ belongs to $P$. A parameterized problem $P$ is classified as \textit{fixed-parameter tractable} (FPT) if there exists an algorithm that answers the question \say{$(x,k) \in P$?} in time $f(k)\cdot poly(|x|)$, where $f: \N \rightarrow \N$ is a computable function. 
\end{definition}
\begin{definition}
A kernelization algorithm, or simply a kernel, for a parameterized problem $P$ is a function $\phi$ that maps an instance $(x,k)$ of $P$ to an instance $(x',k')$ of $P$ such that the following properties are satisfied:
\begin{enumerate}
    \item \label{itm:1}$(x,k) \in P$ if and only if $(x',k') \in P$,
    \item \label{itm:2} $k'+|x'| \le g(k)$ for some computable function $g:\N \rightarrow \N$, and
    \item \label{itm:3} $\phi$ is computable in time polynomial in $|x|$ and $k$.
\end{enumerate}
\end{definition}
If the upper-bound $g(\cdot)$ of the kernel (Property~\ref{itm:2}) is polynomial (linear) in terms of the parameter $k$, then we say that $P$ admits a polynomial (linear) kernel. 
It is common to write a kernelization algorithm as a series of reduction rules. A \emph{reduction rule} is a polynomial-time algorithm that transform an instance $(x,k)$ to an equivalent instance $(x',k')$ such that Property~ \ref{itm:1} is fulfilled. Property~\ref{itm:1} is referred to as the \textit{safeness} or \textit{correctness} of the rule.

\section{Kernelization}
\label{sectn:kernel}
In this section, we demonstrate polynomial kernels for \textsc{Dual Min-LLT} and \textsc{Dual Max-LLT}. But first, we show that \textsc{Min-LLT} and \textsc{Max-LLT} admit polynomial kernels when parameterized by the vertex cover number of the input graph. The following simple lemma is crucial for our kernelization algorithms.

\begin{lemma}\label{upperB_internal}
Let $G$ be a connected graph and let $S$ be a vertex cover of $G$. Then every rooted spanning tree $T$ of $G$ has at most $2|S|$ internal vertices and at most $|S|$ internal vertices are not in $S$.
\end{lemma}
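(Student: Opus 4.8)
The plan is to prove the two bounds in reverse order, since the bound of $2|S|$ on the total number of internal vertices follows immediately once we control the internal vertices lying outside $S$. The starting observation is that, because $S$ is a vertex cover, the complement $I := V(G) \setminus S$ is an independent set; hence every vertex of $I$ has \emph{all} of its $G$-neighbours inside $S$. Since $T$ is a spanning tree of $G$, every tree edge is an edge of $G$, so in particular every child of a vertex $v \in I$ (a neighbour of $v$ joined to it by a tree edge) must lie in $S$.

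The heart of the argument is the second statement. Let $A$ denote the set of internal vertices of $T$ that do not belong to $S$, i.e.\ $A \subseteq I$, and each $v \in A$ has at least one child in $T$ by definition of \emph{internal}. By the observation above, every child of such a $v$ lies in $S$, so I can choose a function $c \colon A \to S$ that assigns to each $v \in A$ one of its children (which exists and lies in $S$). The key point is that $c$ is injective: in a rooted tree every vertex has a unique parent, so if $c(v_1) = c(v_2)$ then $v_1$ and $v_2$ are both the parent of that common child, forcing $v_1 = v_2$. Injectivity of $c$ into $S$ yields $|A| \le |S|$, which is exactly the claim that at most $|S|$ internal vertices lie outside $S$. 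Note this argument is insensitive to whether the root itself lies in $I$, since the root, if internal and in $I$, still has all its children in $S$.

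For the first statement, I would simply partition the internal vertices of $T$ according to membership in $S$. The internal vertices inside $S$ number at most $|S|$ trivially, and the internal vertices outside $S$ number at most $|S|$ by the bound on $A$ just established; summing gives at most $2|S|$ internal vertices in total.

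I do not expect a genuine obstacle here, as the statement is elementary; the only points requiring care are ensuring that $c$ is well defined (every $v \in A$ genuinely has a child, and that child is forced into $S$ by independence of $I$) and that the injectivity of $c$ is correctly justified via uniqueness of parents in a rooted tree. The one conceptual ingredient that must be made explicit is the passage from "$T$ is a spanning tree of $G$" to "tree edges are $G$-edges," which is what lets the vertex-cover property constrain the children of vertices in $I$.
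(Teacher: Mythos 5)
Your proof is correct and follows essentially the same route as the paper's: both arguments observe that every internal vertex outside $S$ has all of its children in $S$ (by the vertex cover property) and that child sets of distinct vertices are disjoint (unique parents), yielding $|X\setminus S|\le |S|$ and hence $|X|\le 2|S|$. Your injective choice function $c$ is just a restatement of the paper's observation that the sets $\cld(v_1),\dots,\cld(v_t)$ are pairwise disjoint nonempty subsets of $S$.
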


\begin{proof}
Let $T$ be a rooted spanning tree tree of $G$ with a set of internal vertices $X$.
For every vertex $v$ of $T$, we denote by $\cld(v)$ the set of its childred in $T$.
For each internal vertex $v$ of $T$, we have $\cld(v)\neq \emptyset$ and if $v\notin S$, then $\cld(v)\subseteq S$ because $S$ is a vertex cover of $G$.
Moreover, for any distinct internal vertices $u$ and $v$ of $T$,  $\cld(u)\cap \cld(v)=\emptyset$.
Given $X\setminus S= \{v_1,\dots,v_t\}$, we deduce that $\cld(v_1),\dots,\cld(v_t)$ are pairwise disjoint and non-empty subsets of $S$. 
We conclude that $|X\setminus S|\leq |S|$ and $|X|\leq  2 |S|$.
\qed
\end{proof}
We also use the following folklore observation.

\begin{observation}\label{obs:int}
The set of internal vertices of any DFS tree $T$ of a connected graph $G$ is a vertex cover of $G$.
\end{observation}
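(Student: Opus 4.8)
The plan is to argue directly from the defining property of a DFS tree: every edge of $G$ forces its two endpoints into an ancestor--descendant relationship in $T$, and the ancestor endpoint is necessarily internal because it has a child on the downward tree-path to the other endpoint.

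First I would fix an arbitrary edge $uv \in E(G)$ and split into the two cases permitted by the DFS property. If $uv \in E(T)$, then in the rooted tree $(T,r)$ one of $u,v$ is the parent of the other, and the parent clearly has a child. If $uv \notin E(T)$, then by the DFS condition one of $u,v$ is a proper descendant of the other; the ancestor endpoint then has a child, namely the first vertex on the tree-path leading down to the descendant endpoint. In either case one endpoint of $uv$ has at least one child, hence is an internal vertex of $T$ by definition.

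Since the edge $uv$ was arbitrary, every edge of $G$ is incident to an internal vertex of $T$, which is precisely the assertion that the set of internal vertices forms a vertex cover of $G$. I expect no genuine obstacle here; the only point requiring a word of care is verifying that the parent, respectively the ancestor, endpoint really does possess a child (so that it is internal rather than a leaf), and this is immediate because the other endpoint is reachable from it along a nonempty downward path in $T$.
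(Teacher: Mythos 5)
Your proof is correct and uses essentially the same idea as the paper: the paper's one-line argument observes that a leaf of a DFS tree is adjacent only to its ancestors (which are necessarily internal), while you phrase the identical fact edge-by-edge, noting that every edge joins an ancestor--descendant pair and the ancestor endpoint is internal. No issues.
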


\begin{proof}
To see the claim, it is sufficient to observe that any leaf of a DFS tree $T$ is adjacent in $G$ only to its  ancestors, that is, to internal vertices.
\qed
\end{proof}
We use Lemma~\ref{upperB_internal} to show that, given a vertex cover, we can reduce the size of the input graph for both  \textsc{Min-LLT} and \textsc{Max-LLT}.

\begin{lemma}\label{lem:vc}
There is a polynomial-time algorithm that, given a connected graph $G$ together with a vertex cover $S$ of size $s$, outputs a graph $G'$ with at most  $s^2(s-1)+3s$ vertices such that for every integer $t\geq 0$,
$G$ has a DFS tree with exactly $t$ internal vertices if and only if $G'$ has a DFS tree with exactly $t$ internal vertices.
\end{lemma}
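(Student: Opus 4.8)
The plan is to retain the whole vertex cover $S$ and to compress the independent set $I=V(G)\setminus S$ by a marking procedure, so that $G'$ is the subgraph of $G$ induced by $S$ together with a bounded selection of vertices of $I$. The structural engine is Lemma~\ref{upperB_internal}: in \emph{every} spanning tree of $G$ at most $s$ vertices of $I$ are internal, so all but $\Oh(s)$ vertices of $I$ are forced to be leaves. Since $I$ is independent, the DFS property pins down the possible roles of a vertex $v\in I$: if $v$ is a leaf then its whole neighbourhood $N(v)\subseteq S$ must lie on the root-to-$v$ path, i.e.\ it must form a chain of ancestors; and if $v$ is internal with children $C\subseteq N(v)$, then the subtrees hanging at the vertices of $C$ are pairwise non-adjacent, because a DFS tree has no cross edges. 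Two vertices of $I$ with the same neighbourhood are therefore interchangeable, and vertices with smaller neighbourhoods are more flexible in the leaf role. These facts identify the redundant vertices of $I$.

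I would then formulate the reduction as two marking rules. The \emph{pendant rule} keeps at most two pendant (degree-one) neighbours in $I$ of each $a\in S$; whenever it deletes some, the two survivors force $a$ to be internal in every spanning tree of $G'$, which is what makes the deleted pendants reinsertable without changing the internal count. This retains at most $2s$ pendants. The \emph{density rule} processes the remaining vertices of $I$ (those of degree at least two) greedily: a vertex $v$ is retained while some ordered pair $(a,b)$ of distinct vertices of $S$ with $a,b\in N(v)$ still has fewer than $s$ retained representatives, charging $v$ to one such pair; once all pairs inside $N(v)$ are saturated, $v$ is deleted. Charging each retained vertex to a distinct pair-slot bounds their number by $s\cdot s(s-1)=s^{2}(s-1)$. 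Retaining at least one representative of every pair witnessed in $I$ also keeps $G'$ connected. Altogether $|V(G')|\le s+2s+s^{2}(s-1)=s^{2}(s-1)+3s$, and both rules clearly run in polynomial time.

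The heart of the proof is safeness: for every $t$, the graph $G'$ has a DFS tree with exactly $t$ internal vertices iff $G$ does. I would prove both inclusions by a role-transfer argument between DFS trees of $G$ and $G'$. Passing from a DFS tree $T'$ of $G'$ to one of $G$ means reinserting the deleted vertices of $I$ as leaves without creating new internal vertices: a deleted pendant is placed under its unique neighbour, which is already internal by the pendant rule; a deleted degree-$\ge 2$ vertex must be placed at the foot of the ancestor chain formed by its neighbourhood, which requires first rearranging $T'$ (at no cost to the internal count) so that this neighbourhood becomes such a chain, using the representatives kept for its pairs. Passing from $T$ of $G$ to $T'$ of $G'$ is the reverse: leaf roles transfer by interchangeability, while each internal vertex $v\in I$ of $T$, with parent $p$ and pairwise non-adjacent child-subtrees, must be matched to a retained vertex adjacent to $p$ and able to parent the same subtrees. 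Here I use that there are at most $s$ internal vertices of $I$ in total, so the $s$ representatives kept per pair always leave an unused, suitably adjacent candidate, and the absence of cross edges between sibling subtrees guarantees the reattachment is again a valid DFS tree.

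The step I expect to fight hardest with is exactly this reverse transfer for internal vertices of $I$ that have several children. One must show that the greedy pair-based marking retains enough vertices to realise all such internal roles simultaneously, while respecting both the ancestor-chain constraint on every leaf and the non-adjacency of the children's subtrees, and crucially without perturbing the exact internal count. Making the bookkeeping of the marking match these DFS-specific constraints---and hence close the bound at $s^{2}(s-1)+3s$---is the delicate part; once that is in place, verifying that each rule is count-preserving and runs in polynomial time is routine.
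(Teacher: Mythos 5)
Your architecture coincides with the paper's: keep $S$, keep two pendant neighbours of each vertex of $S$, keep $\Theta(s)$ common neighbours per pair of vertices of $S$, delete the rest of $I$, and argue that deleted vertices can be forced into, and reinserted in, the leaf role; the counting also matches ($s$ per ordered pair versus the paper's $2s$ per unordered pair). But the safeness argument --- which is the entire content of the lemma --- is left open exactly where it is hard, and the difficulty you flag is an artifact of a missing observation rather than a genuine obstacle. The fact you need, and do not have, is this: if $x\in I$ is deleted, then every pair $u,v\in N(x)$ has more than $2s$ retained common neighbours; by Lemma~\ref{upperB_internal} at least one of these is a leaf of any given spanning tree, and a leaf of a DFS tree is adjacent only to its ancestors, so $u$ and $v$ are forced to be comparable. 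Hence $N_G(x)$ automatically lies on a single root-to-leaf path in \emph{every} DFS tree of $G$ and of $G'$ (Claim~\ref{cl:aux}(i)--(ii)). Two consequences dissolve your ``hardest step'': (a) a deleted vertex has at most one child in any DFS tree of $G$, so there are no ``pairwise non-adjacent child-subtrees'' to re-parent; (b) no rearrangement of $T'$ is needed before reinserting $x$ --- its neighbourhood is already a chain, and $x$ can simply be hung as a leaf below its deepest neighbour.

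Two further count-preservation steps are also missing. First, when you convert an internal $x\in I$ of a tree $T$ of $G$ into a leaf, you must promote some $w\in W_{uv}$ (with $u$ the parent and $v$ the unique child of $x$) from leaf to internal; for the total to stay at exactly $t$ you must choose $w$ so that its old parent does not thereby become a leaf. The paper gets this by pigeonhole: $|W_{uv}|>2s$ yields at least $s+1$ leaves of $T$ in $W_{uv}$ whose parents lie in the $s$-element set $S$, so two such leaves share a parent and either may be moved. Second, when reinserting $x$ as a leaf under its deepest neighbour $v$ in a DFS tree of $G'$, you need $v$ to already be internal there (Claim~\ref{cl:aux}(iii), proved again via Lemma~\ref{upperB_internal} and the fact that two leaves of a DFS tree cannot be adjacent); otherwise the reinsertion changes the internal count. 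You supply the analogous justification for pendants but not for the degree-$\geq 2$ case. Without these three ingredients the proposal does not close; with them it becomes essentially the paper's proof.
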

 
\begin{proof}
Let $G$ be a connected graph and let $S$ be a vertex cover of $G$ of size $s$. As the lemma is trivial if $s=0$, we assume that $s\geq 1$.
Denote $I=V(G)\setminus S$; note that $I$ is an independent set. 
We apply the following two reduction rules to reduce the size of $G$. 

\medskip
The first rule reduces the number of pendant vertices. To describe the rule, denote by $\pnd(v)$ for $v\in S$ the set of pendant vertices of $I$ adjacent to $v$.

\SetAlgoProcName{Rule}{Rule}
\begin{procedure}[bth]
	\SetAlgoLined
	\ForEach{$v\in S$}
	{
		\lIf{$|\pnd(v)| > 2$}{delete all but two vertices in $\pnd(v)$ from $G$}
	}
	\caption{1()}
	\label{dual_min_R1}
\end{procedure}

To see that Rule~\ref{dual_min_R1} is safe, denote by $G'$ the graph obtained from $G$ by the application of the rule. Notice that for every $v\in S$, at most one vertex of $\pnd(v)$ is the root and the other vertices are leaves that are children of $v$ in any rooted spanning tree $T$ of $G$. 

Let $T$ be a DFS tree of $G$ rooted in $r$ with $t$ internal vertices. Because for every $v\in S$, the vertices of $\pnd(v)$ have the same neighborhood in $G$ and Rule~\ref{dual_min_R1}  does not delete all the vertices of $\pnd(v)$, we can assume without loss of generality that $r\in V(G')$. 
Let $T'=T[V(G')]$. Because the deleted vertices are leaves of $T$, we have that $T'$ is a tree and, moreover, $T'$ is a DFS tree of $G'$ rooted in $r$. Clearly, each internal vertex of $T'$ is an internal vertex of $T$.
Let $v\in S$ be a vertex such that $|\pnd(v)|>2$. Then $v$ has a pendant neighbor $u\neq r$ in $G'$ and $u$ should be a child of $v$ in $T'$. Thus, $v$ is an internal vertex of $T'$. This implies that every leaf $v$ of $T'$ is not adjacent to any vertex of $V(G)\setminus V(G')$ in $G$. Hence, $v$ is a leaf of $T$. Because the deleted vertices are leaves of $T$, we obtain that a vertex $v\in V(G)$ is an internal vertex of $T$ if and only if $v$ is an internal vertex of $T'$. Then $T$ and $T'$ have the same number of internal vertices.

For the opposite direction, let $T'$ be a DFS tree of $G'$ rooted in $r$ with $t$ internal vertices. We construct the tree $T$ from $T'$ by adding each deleted vertex $u$ as a leaf to $T'$: if $u\in V(G)\setminus V(G')$, then $u\in\pnd(v)$ for some $v\in S$ and we add $u$ as a leaf child of $v$. Because the deleted vertices are pendants, we have that $T$ is a DFS tree of $G$. Observe that each internal vertex of $T'$ remains internal in $T$. 
In the same way as above, we observe that a vertex  $v\in S$ with $|\pnd(v)|>2$ cannot be a leaf of $T'$, because $v$ has a pendant neighbor in $G'$ distinct from $r$ that should be a child of $v$. Hence, 
every leaf $v$ of $T'$  is not adjacent to any vertex of $V(G)\setminus V(G')$ in $G$ and, therefore, is a leaf of $T$. Since  the deleted vertices are leaves of $T$, we obtain that a vertex $v\in V(G)$ is an internal vertex of $T$ if and only if $v$ is an internal vertex of $T'$. Thus, $T$ and $T'$ have the same number of internal vertices. This concludes the safeness proof.

\medskip
The next rule is used to reduce the number of nonpendant vertices of $I$.  For each pair of vertices $u, v \in S$, we use \emph{common neighbor} of $u$ and $v$ to refer to a vertex $w \in I$ that is adjacent to both $u$ and $v$ and denote by $W_{uv}$ the set of common neighbors of  $u$ and $v$.  Rule \ref{dual_min_R2}  is based on the observation that if the size of $W_{uv}$ for any vertex pair $u,v \in S$ is at least $2s+1$, then it follows from Lemma~\ref{upperB_internal} that every spanning tree $T$ contains at most $s$ internal vertices and at least $s+1$ leaves from $W_{uv}$. We prove that it is enough to keep at most $2s$ vertices from $W_{uv}$ for each $u,v\in S$.

\begin{procedure}
\ForAll{pairs $\{u,v\}$ of distinct vertices of $S$}{Label $\max\{|W_{uv}|,2s\}$ vertices in $W_{uv}$\;} 
Delete the unlabeled vertices of $I$ with at least two neighbors in $S$ from $G$.
\caption{2()}
\label{dual_min_R2}
\end{procedure}

\newcommand{\qedclaim}{\hfill $\blacklozenge$}

To show that Rule~\ref{dual_min_R2} is safe, let $x\in I$ be a vertex with at least two neighbors in $S$ which is not labeled by Rule~\ref{dual_min_R2}. Let $G'=G-x$. We claim that $G$ has a DFS tree with exactly $t$ internal vertices if and only if $G'$ has a DFS tree with exactly $t$ internal vertices. 

We use the following auxiliary claim, the proof of which can be found in Appendix \ref{appendix:A}.

\begin{claimperso} \label{cl:aux}
\text{}
\begin{itemize}
\item[(i)] For any DFS tree $T$ of $G$,  the vertices of $N_G(x)$ are vertices of a root-to-leaf path of $T$.
\item[(ii)] For any DFS tree $T'$ of $G'$, the vertices of $N_G(x)$ are vertices of a root-to-leaf path of $T'$.
\item[(iii)] For any DFS tree $T'$ of $G'$, every vertex of $N_G(x)$ is an internal vertex of $T'$.
\end{itemize} 
\end{claimperso}




We use Claim~\ref{cl:aux} to show the following property.

\begin{claimperso}\label{cl:leaf-x}
If $G$ has a DFS tree with $t$ internal vertices, then $G$ has a DFS tree $T$ with $t$ internal vertices such that $x$ is a leaf of $T$.
\end{claimperso}

\begin{proof}[Proof of Claim~\ref{cl:leaf-x}]
Let $T$ be a DFS tree of $G$ with a root $r$ that has exactly $t$ internal vertices. We prove that if $x$ is an internal vertex of $T$, then $T$ can be modified in such a way that $x$ would become a leaf. Observe that by Claim~\ref{cl:aux} (i), $x$ has a unique child $v$ in $T$. We have two cases depending on whether  $x=r$ or has a parent $u$.    

Suppose first that $x=r$. By Claim~\ref{cl:aux}, the neighbors of $x$ in $G$ are vertices of some root-to-leaf path of $T$. Let $u$  be the neighbor of $x$ at maximum distance from $r$ in $T$.
Because $d_G(x)\geq 2$, $u\neq v$.   Since  $x$ is not labeled by Rule~\ref{dual_min_R2}, $|W_{uv}|>2s$. By Lemma~\ref{upperB_internal}, there are
at least $s+1$ vertices $W_{uv}$ that are leaves of $T$. These leaves have their parents in $S$ which has size $s$. By the pigeonhole principle, there are distinct leaves $w,w'\in W_{uv}$ with the same parent. We rearrange $T$ by making $w$ a root with the unique child $v$ and making $x$ a leaf with the parent $u$. Denote by $T'$ the obtained tree. 

Because $x$ is adjacent to $u$ and some of its ancestors in $T$ and $w$ is adjacent only to some of its ancestors in $T$, we conclude that $T'$ is a feasible DFS tree.  Notice that $w$ which was a leaf of $T$ became an internal vertex of $T'$ and $x$ that was an internal vertex is now a leaf. Because $x$ is a leaf of $T'$, we have that $T''=T'-x$ is a DFS tree of $G'$ rooted in $w$. By Claim~\ref{cl:aux} (iii), $u$ is an internal vertex of $T''$. This implies that $u$ is an internal vertex of both $T$ and $T'$. Since the parent of $w$ in $T$ has $w'\neq w$ as a child, we also have that $w$ is an internal vertex of both $T$ and $T'$. Therefore, $T$ and $T'$ have the same number of internal vertices. This proves that $G$ has a DFS tree $T'$ with $t$ internal vertices such that $x$ is a leaf of $T'$.

Assume now that $x$ has a parent $u$ in $T$. By Claim~\ref{cl:aux}, the neighbors of $x$ in $G$ are vertices of some root-to-leaf path of $T$. Denote by  $v'$  be the neighbor of $x$ at maximum distance from $r$ in $T$; it may happen that $v'=v$. As $x$ is not labeled by Rule~\ref{dual_min_R2}, $|W_{uv}|>2s$. Then by Lemma~\ref{upperB_internal}, there are
at least $s+1$ vertices $W_{uv}$ that are leaves of $T$. These leaves have their parents in $S$ which has size $s$. By the pigeonhole principle, there are distinct leaves $w,w'\in W_{uv}$ with the same parent. We rearrange $T$ by making $w$ a child of $u$ and a parent of $v$ and making $x$ a leaf with the parent $v'$. Denote by $T'$ the obtained tree. 

Because $x$ is adjacent to $v'$ and some of its ancestors in $T$ and $w$ is adjacent only to some of its ancestors in $T$, including $u$ and $v$, we have that $T'$ is a feasible DFS tree.  Notice that $w$ was a leaf of $T$ and is now an internal vertex of $T'$, while $x$ was an internal vertex in $T$ and is now a leaf in $T'$. Because $x$ is a leaf of $T'$, we have that $T''=T'-x$ is a DFS tree of $G'$ rooted in $w$. By Claim~\ref{cl:aux} (iii), $v'$ is an internal vertex of $T''$. Therefore, $v'$ is an internal vertex of both $T$ and $T'$. Since the parent of $w$ in $T$ has $w'\neq w$ as a child, we also have that $w$ is an internal vertex of both $T$ and $T'$. Thus, $T$ and $T'$ have the same number of internal vertices. We obtain that $G$ has a DFS tree $T'$ with $t$ vertices such that $x$ is a leaf of $T'$. This concludes the proof.
\qedclaim
\end{proof}

Now we are ready to proceed with the proof that $G$ has a DFS tree with exactly $t$ internal vertices if and only if $G'$ has a DFS tree with exactly $t$ internal vertices. 

For the forward direction, let $T$ be a DFS tree of $G$ with $t$ internal vertices. By Claim~\ref{cl:leaf-x}, we can assume that $x$ is a leaf of $T$.
Let $T'=T-x$. Because $x$ is a leaf of $T$, $T'$ is a DFS tree of $G'$. 
Let $u$ be the parent of $x$ in $T$. Because $u$ is adjacent to $x$ in $G$, we have that $u$ is an internal vertex of $T'$ by Claim~\ref{cl:aux} (iii). 
This means that the number of internal vertices of $T$ and $T'$ is the same, that is, $G'$ has a DFS tree with $t$ vertices.

For the opposite direction, let $T'$ be a DFS tree of $G'$ with $t$ internal vertices with a root $r$. By Claim~\ref{cl:aux} (ii),  the neighbors of $x$ in $G$ are vertices of some root-to-leaf path in $T'$. Let $v$ be the neighbor of $x$ at maximum distance from $r$ in $T'$. We construct $T$ by making $x$ a leaf with the parent $v$.  Because $x$ is adjacent in $G$ only to $v$ and some of  its ancestors in $T'$, $T$ is a DFS tree. By Claim~\ref{cl:aux}(iii), $v$ is an internal vertex of $T'$. Therefore, $T'$ and $T$ have the same set of internal vertices. We obtain that  $G$ has a DFS tree with $t$ vertices. This concludes the proof of our claim.

Recall that $G'$ was obtained from $G$ by deleting a single unlabeled vertex $x\in I$ of degree at least two. Applying the claim that $G$ has a DFS tree with exactly $t$ internal vertices if and only if $G'=G-x$ has a DFS tree with exactly $t$ internal vertices inductively for unlabeled vertices of $I$ of degree at least two, we obtain that Rule~\ref{dual_min_R2} is safe.

\medskip
Denote now by $G'$ the graph obtained from $G$ by the application of Rules~\ref{dual_min_R1} and \ref{dual_min_R2}. Because both rules are safe, for any integer $t\geq 0$, $G$ has a DFS tree with exactly $t$ internal vertices if and only if $G'$ has a DFS tree with exactly $t$ internal vertices. Because of Rule~\ref{dual_min_R1}, $G'-S$ has at most $2s$ pendant vertices. Rule~\ref{dual_min_R2} guarantees that $G'-S$ has at most $2s\binom{s}{2}=s^2(s-1)$ vertices of degree at least two. Then the total number of vertices of $G'$ is at most $s^2(s-1)+2s+s=s^2(s-1)+3s$.

\medskip
It is straightforward to see that  Rule~\ref{dual_min_R1} can be applied in $\Oh(sn)$ time and Rule~\ref{dual_min_R2} can be applied in $\Oh(s^2n)$ time. Therefore, the algorithm is polynomial. This concludes the proof.
\qed
\end{proof}
As a direct consequence of Lemma~\ref{lem:vc} we obtain that \textsc{Min-LLT} and \textsc{Max-LLT} admit polynomial kernels when parameterized by the vertex cover number of the input graph.


We are ready to prove our kernels parameterized by vertex cover.

\begin{proof}[Proof of Theorem~\ref{thm:main:vc}]
We show the theorem for \textsc{Min-LLT}; the arguments for \textsc{Max-LLT} are almost identical. 
Recall that the task of \textsc{Min-LLT} is  to decide,   given a graph $G$ and an integer $k\geq 0$, whether $G$ has a DFS tree with at most $k$ leaves. Equivalently, we can ask whether $G$ has a DFS tree with at least $|V(G)|-k$ internal vertices. Let $(G,k)$ be an instance of \textsc{Min-LLT}. We assume that $G$ is connected as, otherwise, $(G,k)$ is a no-instance and we can return a trivial no-instance of  \textsc{Min-LLT} of constant size. 

First, we find a vertex cover $S$ of $G$. For this, we apply a folklore approximation algorithm (see, e.g.,~\cite{cyganPC}) that greedily finds an inclusion-maximal matching $M$ in $G$ and takes the set $S$ of endpoints of the edges of $M$. It is well-known that $|S|\leq 2\tau$. Then we apply the algorithm from Lemma~\ref{lem:vc}. Let $G'$ be the output graph. By Lemma~\ref{lem:vc}, $G'$ has $\Oh(\tau^3)$ vertices. We set $k'=k-|V(G)|+|V(G')|$ and return the instance $(G',k')$ of \textsc{Min-LLT}.

Suppose that $G$ has a DFS tree with at most $k$ leaves. Then $G$ has a DFS tree with $t\geq |V(G)|-k$ internal vertices. By Lemma~\ref{lem:vc}, $G'$ also has a DFS tree with $t$ internal vertices. Then $G'$ has a DFS tree with $|V(G')|-t\leq |V(G')|-(|V(G)|-k)=k'$ leaves. For the opposite direction, assume that $G'$ has a DFS tree with at most $k'$ leaves. Then $G'$ has a DFS tree with $t\geq |V(G')|-k'=|V(G)|-k$ internal vertices. By Lemma~\ref{lem:vc}, $G$ has a DFS tree with $t$ internal vertices and, therefore, $G$ has a DFS tree with at most $k$ leaves.

Because $S$ can be constructed in linear time and the algorithm from  Lemma~\ref{lem:vc} is polynomial, the overall running time is polynomial. This concludes the proof.
\qed
\end{proof}
\noindent
Now we demonstrate a polynomial kernel for \textsc{Dual Min-LLT}.

\begin{theorem}\label{thm:dual-min}
 \textsc{Dual Min-LLT} admits a kernel with $\Oh(k^3)$ vertices.
\end{theorem}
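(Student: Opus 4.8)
The plan is to establish a win-win dichotomy based on the size of the vertex cover, exactly as the introduction foreshadows for the dual parameterizations. Let $(G,k)$ be an instance of \textsc{Dual Min-LLT}, where the task is to decide whether $G$ has a DFS tree with at least $n-k$ leaves, equivalently, with at most $k$ internal vertices. I would first assume $G$ is connected, since otherwise no spanning tree exists and we return a trivial no-instance. The key observation, via Observation~\ref{obs:int}, is that the internal vertices of any DFS tree form a vertex cover of $G$. Hence if $G$ admits a DFS tree with at most $k$ internal vertices, then $\tau(G)\le k$, so $G$ has a vertex cover of size at most $k$.

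The dichotomy then runs as follows. First I would compute an approximate vertex cover $S$ using the folklore maximal-matching $2$-approximation, obtaining $|S|\le 2\tau(G)$. If $|S|>2k$, then $\tau(G)>k$, and by the observation above no DFS tree can have at most $k$ internal vertices; in this case we return a trivial no-instance of constant size. Otherwise $|S|\le 2k$, and we feed $G$ together with the vertex cover $S$ into the algorithm of Lemma~\ref{lem:vc}, obtaining a graph $G'$ with at most $|S|^2(|S|-1)+3|S|=\Oh(k^3)$ vertices such that, for every $t\ge 0$, $G$ has a DFS tree with exactly $t$ internal vertices if and only if $G'$ does. I would set $k'=k$ and return $(G',k')$.

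For correctness I would argue equivalence of the instances. By Lemma~\ref{lem:vc}, $G$ has a DFS tree with at most $k$ internal vertices if and only if $G'$ has one with at most $k$ internal vertices (applying the exact-$t$ equivalence for each $t\le k$). Translating back, $G$ has a DFS tree with at least $n-k$ leaves iff $G'$ has a DFS tree with at least $|V(G')|-k$ leaves, which is precisely the \textsc{Dual Min-LLT} condition on the instance $(G',k)$. The size bound $\Oh(k^3)$ is immediate from Lemma~\ref{lem:vc} together with $|S|\le 2k$, and the running time is polynomial since the matching and Lemma~\ref{lem:vc} are both polynomial.

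I do not expect a significant obstacle here, since the heavy lifting---the two reduction rules and their safeness---is already packaged inside Lemma~\ref{lem:vc}. The only point requiring mild care is the bookkeeping between the \say{leaves} formulation of \textsc{Dual Min-LLT} and the \say{internal vertices} formulation used in Lemma~\ref{lem:vc}, and in particular ensuring that the parameter $k'=k$ is correctly preserved under the change of vertex count from $n$ to $|V(G')|$; I would verify that $n-k$ leaves in $G$ corresponds to at most $k$ internal vertices regardless of $n$, so no shift of the parameter is needed. The one genuinely essential ingredient beyond Lemma~\ref{lem:vc} is Observation~\ref{obs:int}, which is what makes the \say{large vertex cover $\Rightarrow$ trivial no-instance} branch valid.
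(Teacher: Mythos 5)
Your argument proves the wrong theorem. You interpret \textsc{Dual Min-LLT} as asking for a DFS tree with at least $n-k$ leaves, i.e., at most $k$ internal vertices; but that is \textsc{Dual Max-LLT}, and your proof is essentially verbatim the paper's proof of Theorem~\ref{thm:dual-max}. (The introduction does phrase the two duals in the swapped order, so the confusion is understandable, but the body of the paper---and the stated analogy with $k$-\textsc{Internal Spanning Tree}---makes clear that \textsc{Dual Min-LLT} asks for a DFS tree with at most $n-k$ leaves, equivalently at \emph{least} $k$ internal vertices.) For the problem actually at hand, your win-win branch is inverted and would return wrong answers: if $\tau(G)\ge k$, then by Observation~\ref{obs:int} \emph{every} DFS tree has at least $\tau(G)\ge k$ internal vertices, so the instance is a trivial \emph{yes}-instance, not a no-instance as you claim.

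The repair is small. Either flip the branch---if the maximal matching $M$ satisfies $|M|\ge k$, then $\tau(G)\ge|M|\ge k$ and you return a trivial yes-instance; otherwise the set $S$ of endpoints of $M$ is a vertex cover of size at most $2k-2$ and you proceed with Lemma~\ref{lem:vc} and $k'=k$---or do what the paper does: run DFS once from an arbitrary root; if the resulting tree already has at least $k$ internal vertices, answer yes; otherwise its set of internal vertices is, by Observation~\ref{obs:int}, a vertex cover of size at most $k-1$, which you feed into Lemma~\ref{lem:vc}. Everything downstream of the branch in your write-up (invoking the exact-$t$ equivalence of Lemma~\ref{lem:vc} for the relevant values of $t$, keeping $k'=k$, the $\Oh(k^3)$ size bound, and the polynomial running time) goes through once the condition is corrected to ``at least $k$ internal vertices.''
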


\begin{proof}
Recall that the task of \textsc{Dual Min-DLL} is to verify, given a graph $G$ and an integer $k\geq 0$, whether $G$ has a DFS tree with  at most $n-k$ leaves. Equivalently, the task is to check whether  $G$ has a DFS tree with at least $k$ internal vertices. Let $(G,k)$ be an instance of \textsc{Dual Min-LLT}. If $G$ is disconnected, then $(G,k)$ is a no-instance and we return a trivial no-instance of \textsc{Dual Min-DLL}  of constant size. From now, we assume that $G$ is connected.

We select an arbitrary vertex $r$ of $G$ and run the DFS algorithm from this vertex. The algorithm produces a DFS tree $T$. Let $S$ be the set of internal vertices of $T$.  If $|S|\geq k$, then we conclude that $(G,k)$ is a yes-instance. Then the kernelization algorithm returns a trivial yes-instance of \textsc{Dual Min-LLT} of constant size and stops. Assume that this is not the case and $|S|\leq k-1$.

By Observation~\ref{obs:int}, we have that $S$ is a vertex cover of $G$ of size $s\leq k-1$. We use $S$ to call the algorithm from Lemma~\ref{lem:vc}. Let $G'$ be a graph produced by the algorithm. By Lemma~\ref{lem:vc}, $G'$ has $\Oh(k^3)$ vertices. Our kernelization algorithm returns $(G',k)$ and stops.

To see correctness, it is sufficient to observe that by Lemma~\ref{lem:vc}, for any integer $t\geq k$, $G$ has a DFS tree with $t$ internal vertices if and only if $G'$ has a DFS tree with $t$ internal vertices.
Because the DFS algorithm runs in linear time (see, e.g.,~\cite{cormen2009}) and the algorithm from Lemma~\ref{lem:vc} is polynomial, the overall running time is polynomial. This completes the proof.
\qed
\end{proof}

\noindent
We use similar arguments to prove the following theorem in Appendix \ref{appendix:B}.

\begin{theorem}\label{thm:dual-max}
 \textsc{Dual Max-LLT} admits a kernel with $\Oh(k^3)$ vertices.
\end{theorem}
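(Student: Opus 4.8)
The plan is to mirror the proof of Theorem~\ref{thm:dual-min} almost verbatim, exploiting the fact that Lemma~\ref{lem:vc} is phrased neutrally in terms of the \emph{number of internal vertices}, so it serves both the minimization and maximization settings. Recall that \textsc{Dual Max-LLT} asks, given $(G,k)$, whether $G$ has a DFS tree with \emph{at most} $n-k$ leaves turned around into the dual direction: here the task is to decide whether $G$ has a DFS tree with \emph{at least} $n-k$ leaves, equivalently a DFS tree with \emph{at most} $k$ internal vertices. So the only structural change from the \textsc{Dual Min-LLT} proof is that the threshold condition flips from ``at least $k$ internal vertices'' to ``at most $k$ internal vertices.''

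First I would dispose of the trivial cases: if $G$ is disconnected, there is no spanning tree at all, so I return a fixed-size no-instance. Assuming $G$ connected, I would again run DFS from an arbitrary root $r$ to obtain a DFS tree $T$, and let $S$ be its set of internal vertices. By Observation~\ref{obs:int}, $S$ is a vertex cover of $G$. The branching on $|S|$ is the point where the argument differs from the minimization case. If $|S|\le k$, then $T$ itself is a DFS tree with at most $k$ internal vertices, so $(G,k)$ is immediately a yes-instance and I return a trivial constant-size yes-instance. Otherwise $|S|\ge k+1$; I keep $S$ as my vertex cover and feed it, together with $G$, into the algorithm of Lemma~\ref{lem:vc}, producing a graph $G'$ with $\Oh(|S|^3)$ vertices and return $(G',k)$.

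The subtle bound to verify is that this $|S|$ is genuinely $\Oh(k)$, so that $|V(G')|=\Oh(k^3)$. In the minimization proof the bound $|S|\le k-1$ came for free from the branching. Here, once we are in the nontrivial branch, we only know $|S|\ge k+1$, which is a lower bound and does not directly cap $|S|$. To get an upper bound I would invoke Lemma~\ref{upperB_internal}: every DFS tree of $G$ has at most $2\tau\le 2|S|$ internal vertices, but that still references $|S|$ itself. The clean route is instead to observe that the DFS tree $T$ we computed has some number of internal vertices $|S|$, and if $G$ admits \emph{any} DFS tree with at most $k$ internal vertices then in particular the minimum over all DFS trees of the internal-vertex count is at most $k$; what we actually need is that whenever $(G,k)$ is a yes-instance, every vertex cover produced by DFS is bounded. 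The cleanest fix, and the one I would adopt, is to take $S$ to be the internal set of the DFS tree $T$ and apply Lemma~\ref{upperB_internal} to $T$ regarded against a minimum vertex cover: since the internal vertices of $T$ form a vertex cover, and by Lemma~\ref{upperB_internal} any spanning tree has at most $2\tau$ internal vertices, we get $|S|\le 2\tau$; combined with the fact that in a yes-instance some DFS tree has at most $k$ internal vertices and hence $\tau\le k$ (its internal set is a vertex cover), we obtain $|S|\le 2k$. Thus $G'$ has $\Oh(k^3)$ vertices.

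Correctness then follows exactly as in Theorem~\ref{thm:dual-min}: by Lemma~\ref{lem:vc}, for every integer $t$, $G$ has a DFS tree with exactly $t$ internal vertices iff $G'$ does, so $G$ has a DFS tree with at most $k$ internal vertices iff $G'$ does, i.e.\ $(G,k)$ and $(G',k)$ are equivalent. Polynomiality is immediate since DFS is linear and Lemma~\ref{lem:vc} runs in polynomial time. I expect the main obstacle to be precisely the bounding of $|S|$ in terms of $k$: in the maximization setting the trivial branch yields a lower bound on $|S|$ rather than the convenient upper bound that appears automatically in the minimization proof, so one must argue separately (via Lemma~\ref{upperB_internal} and the vertex-cover view of yes-instances) that $|S|=\Oh(k)$ before the $\Oh(k^3)$ kernel size can be claimed.
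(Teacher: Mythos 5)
Your overall strategy is the same win--win as the paper's: produce a vertex cover of size $\Oh(k)$ (or conclude that the instance is negative) and hand it to Lemma~\ref{lem:vc}, whose statement is symmetric in the number of internal vertices and therefore covers the maximization direction for free. The paper obtains its cover from a greedily computed inclusion-maximal matching $M$, rejecting when $|M|>k$ because $\tau(G)\ge |M|$ while a yes-instance forces $\tau(G)\le k$ by Observation~\ref{obs:int}; otherwise the at most $2k$ endpoints of $M$ serve as $S$. You instead take $S$ to be the internal vertices of an arbitrary DFS tree, which is an equally legitimate vertex cover, and your chain $|S|\le 2\tau(G)\le 2k$ (Lemma~\ref{upperB_internal} applied against a minimum vertex cover, then Observation~\ref{obs:int} applied to a hypothetical solution tree) is correct \emph{on yes-instances}.

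The gap is that your algorithm, as written, never rejects when $|S|$ is large. After the branch ``$|S|\le k$: accept'' you proceed directly to calling Lemma~\ref{lem:vc} with $S$. On a no-instance $\tau(G)$, and hence $|S|$, can be arbitrarily large compared to $k$, so the output graph has $\Theta(|S|^3)$ vertices with no bound in terms of $k$; a kernelization must output an instance of size $g(k)$ on \emph{every} input, not only on yes-instances. The repair is exactly the contrapositive of the inequality you proved: insert the branch ``if $|S|>2k$, then $\tau(G)>k$, so $(G,k)$ is a no-instance; return a trivial constant-size no-instance.'' With that single line added your argument is complete and coincides with the paper's proof up to the choice of the approximate vertex cover (DFS internal vertices versus matching endpoints); your extra early-accept branch for $|S|\le k$ is sound but not needed, since Lemma~\ref{lem:vc} preserves the exact internal-vertex counts for all $t$ and correctness would follow anyway.
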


Theorems~\ref{thm:dual-min} and~\ref{thm:dual-max} implies Theorem~\ref{thm:main:dual}.

\section{FPT Algorithms}\label{sectn:fpt_algorithms}
In this section, we give algorithms that solve \textsc{Dual Min-LLT} and \textsc{Dual Max-LLT} in FPT time using the kernels given in the previous section. Our algorithms are brute force algorithms which guess internal vertices. 

Recall that the standard DFS algorithm~\cite{cormen2009} outputs a labeled spanning tree. More formally, given an $n$-vertex graph and a root vertex $r$, the algorithm outputs a DFS tree $T$ rooted in $r$ and assigns to the vertices of $G$ distinct labels $d[v]$ from $\{1,\ldots,n\}$ giving the order in which the vertices were discovered by the algorithm. Thus, the algorithm outputs a linear ordering of vertices. Given an ordering $v_1,\ldots,v_n$ of $V(G)$, we say that a DFS tree $T$ \emph{respects} the ordering if $T$ is produced by the DFS algorithm in such a way that $d[v_i]=i$ for every $i\in\{1,\ldots,n\}$. Observe that for an ordering of the vertices of $G$, there is a unique way to run the DFS algorithm to obtain $T$ respecting the ordering. This gives us the following observation.

\begin{observation}\label{obs:order}
It can be decided in linear time, given an ordering $v_1,\ldots,v_n$ of the vertices of a graph $G$, whether $G$ has a DFS tree respecting the ordering. Furthermore, if such a tree $T$ exists, it is unique and can be constructed in linear time. 
\end{observation}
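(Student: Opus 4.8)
The plan is to exploit the determinism of depth-first search once the vertex ordering is fixed: as noted just above the statement, an ordering leaves the DFS algorithm essentially no freedom, so I would simulate the forced execution and either build the unique tree $T$ or detect that no such tree exists. The root must be $v_1$, since it is the first vertex discovered. The key structural fact I would first establish is that, in any DFS tree $T$ respecting the ordering, the parent of each $v_i$ (for $i\ge 2$) is forced to be $v_{p(i)}$, where $p(i)=\max\{j<i : v_{j}\in N_G(v_i)\}$ is the index of the highest-indexed earlier neighbour of $v_i$. Indeed, every non-tree edge of a DFS tree joins an ancestor to a descendant, so every neighbour of $v_i$ with smaller index is an ancestor of $v_i$ and thus lies on the root-to-$v_i$ stack at the moment $v_i$ is discovered; since discovery times strictly increase along any root-to-leaf path, the highest-indexed such neighbour is exactly the top of that stack, i.e. the parent. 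This already yields uniqueness: the parent of every vertex is determined by the ordering alone, so $T$, if it exists, is unique.

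Next I would turn this into a linear-time decision-and-construction procedure. Maintaining the current DFS stack, I process $i=2,\dots,n$ in turn. If $v_i$ has no earlier neighbour then $p(i)$ is undefined and I report that no tree exists. Otherwise I pop vertices off the stack until its top equals $v_{p(i)}$, and then push $v_i$ as a child of $v_{p(i)}$. The one genuine feasibility test happens during the popping: the forced DFS may backtrack out of a vertex $w$ only when all neighbours of $w$ are already discovered. Because the vertices discovered before step $i$ are exactly $v_1,\dots,v_{i-1}$, the vertex $w$ still has an undiscovered neighbour precisely when its highest-indexed neighbour has index at least $i$; if I am forced to pop such a $w$ (or if the stack empties before $v_{p(i)}$ is reached), then the forced execution would have to explore some later vertex before $v_i$, so no DFS tree respects the ordering and I report failure.

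For correctness, a successful run produces a spanning tree that is, by construction, the output of a legal DFS execution discovering the vertices in the order $v_1,\dots,v_n$; hence it is a DFS tree respecting the ordering. Conversely, since every choice made by the procedure is forced, a failure at any step certifies that no DFS tree can respect the ordering. For the running time, I would precompute, in $\Oh(n+m)$ total time, the value $p(i)$ for every $i$ and the highest-indexed neighbour of every vertex; the simulation then pushes and pops each vertex at most once and performs an $\Oh(1)$ check at each pop, so the whole procedure runs in $\Oh(n+m)$ time, which is linear.

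I expect the main obstacle to be the careful justification that the backtracking condition --- pop $w$ only when $w$ has no undiscovered neighbour --- is both necessary and sufficient for feasibility; that is, arguing that an illegal forced pop genuinely rules out \emph{every} DFS tree respecting the ordering, rather than merely the one produced by my particular tie-breaking. Everything else, including the bookkeeping that reduces the ``undiscovered neighbour'' test to a comparison of precomputed indices and the resulting complexity analysis, should be routine once this forcing argument is in place.
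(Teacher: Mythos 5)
Your proof is correct and takes the same route the paper intends: the paper states this observation without a separate proof, relying on the preceding remark that a fixed vertex ordering forces a unique DFS execution, and your simulation of that forced execution (parent of $v_i$ is its highest-indexed earlier neighbour, with the backtracking feasibility check) is a careful and accurate elaboration of exactly that idea.
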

Let $G$ be a graph and let $r\in V(G)$. For a tree $T\subseteq G$ with $r\in V(T)$, we say that $T$ is \emph{extendable} to a DFS tree rooted in $r$, if there is a DFS tree $T'$ of $G$ rooted in $r$ such that $T$ is a subtree of $T'$. We call $T'$ an \emph{extension} of $T$. The definition of a DFS tree immediately gives us the following necessary and sufficient conditions for the extendability of $T$.    

\begin{observation}\label{obs:ext}
Let $G$ be a graph with $r\in V(G)$ and let $T\subseteq G$ be a tree containing $r$. Then $T$ is extendable to a DFS tree rooted in $r$ if and only if 
\begin{itemize}
\item[(i)] $T$ is a DFS tree rooted in $r$ of $G[V(T)]$,  
\item[(ii)] for every connected component $C$ of $G-V(T)$, the vertices of $N_G(V(C))$ are vertices of a root-to-leaf path of $T$.
\end{itemize}
\end{observation}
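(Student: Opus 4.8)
The plan is to prove Observation~\ref{obs:ext} by establishing both directions of the equivalence, with the main work lying in the \emph{sufficiency} direction. Both conditions are clearly necessary: if $T$ is a subtree of some DFS tree $T'$ of $G$, then restricting $T'$ to $V(T)$ shows $T$ must itself be a valid DFS tree of $G[V(T)]$ (giving~(i)), and for~(ii) I would invoke the characterizing property of a DFS tree: any edge of $G$ joining a vertex of a component $C$ of $G-V(T)$ to a vertex of $V(T)$ must, in $T'$, connect an ancestor-descendant pair. Since $C$ is connected and hangs below $V(T)$ in $T'$, all of $N_G(V(C))$ must lie on a single root-to-leaf path of $T$, because two vertices of $V(T)$ on incomparable branches could not both be ancestors of the same subtree-vertex.

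For the sufficiency direction, I would give a constructive argument: assume (i) and (ii) and build an extension $T'$ explicitly by processing the components of $G - V(T)$ one at a time and attaching each to $T$. Fix a component $C$; by~(ii) its neighborhood $N_G(V(C))$ lies on one root-to-leaf path of $T$, so there is a unique deepest such neighbor $v$ (the one farthest from $r$). The key claim is that $C$ can be grown as a DFS subtree rooted at some child of $v$ (or attached below $v$) without violating the DFS property, precisely because every edge from $C$ back into $V(T)$ reaches an ancestor of $v$, hence an ancestor in $T'$. Internally, $C$ itself admits a DFS tree (run DFS on $G[V(C)]$ from the appropriate entry vertex), and edges between $C$ and the rest of $G-V(T)$ are handled by processing the components independently since distinct components have no edges between them. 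I would formalize this as an ordering argument compatible with Observation~\ref{obs:order}: extend the discovery ordering of $T$ by inserting each component's DFS ordering immediately after its deepest neighbor $v$, and verify the resulting global ordering yields a DFS tree respecting it whose restriction to $V(T)$ is exactly $T$.

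The main obstacle I anticipate is verifying that attaching the components does not retroactively break the DFS property \emph{between} components or between a component and the part of $T$ below $v$. Concretely, when I splice component $C$ into the ordering just after its deepest neighbor $v$, I must ensure that vertices of $T$ lying below $v$ on the path, and vertices of other components, do not create a cross edge to $C$. Condition~(ii) controls edges from $C$ into $V(T)$ (they all hit the path through $v$ and thus ancestors), and the independence of components controls edges among the $C$'s, so the delicate point is simply confirming that a vertex of $T$ strictly below $v$ has no edge to $C$ — which again follows from~(ii), since such a vertex would be a neighbor of $C$ deeper than $v$, contradicting the choice of $v$ as the deepest neighbor. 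Once this is pinned down, the construction terminates after all components are attached, producing a spanning DFS tree $T'$ of $G$ with $T$ as a subtree, completing the proof.
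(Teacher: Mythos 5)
Your proof is correct and is exactly the argument the paper intends: the paper states this observation without proof (``the definition of a DFS tree immediately gives us\dots''), and your construction --- rooting a DFS tree of each component $G[V(C)]$ at a neighbour of the deepest vertex $v$ of $N_G(V(C))$ and hanging it below $v$, together with the check that no vertex of $T$ strictly below $v$ and no other component can have an edge into $C$ --- fills in precisely the omitted details, while your necessity argument correctly uses that a connected component of $G-V(T)$ sits entirely below a single vertex of the DFS tree. (The only unstated caveat, shared with the paper's formulation, is that the sufficiency direction needs $N_G(V(C))\neq\emptyset$ for every component, which is guaranteed since $G$ is connected in every application of the observation.)
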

Note that (i) and (ii) can be verified in polynomial (in fact, linear) time.
We need the following variants of Observation~\ref{obs:ext} for special extensions in our algorithms.

\begin{observation}\label{obs:ext-int}
Let $G$ be a graph with $r\in V(G)$ and let $T\subseteq G$ be a tree containing $r$. Then $T$ is extendable to a DFS tree rooted in $r$ with an extension $T'$ such that the vertices of $V(T)$ are internal vertices of $T'$ if and only if 
\begin{itemize}
\item[(i)] $T$ is a DFS tree rooted in $r$ of $G[V(T)]$,  
\item[(ii)] for every connected component $C$ of $G-V(T)$, the vertices of $N_G(V(C))$ are vertices of a root-to-leaf path of $T$,
\item[(iii)] for every leaf $v$ of $T$, there is $u\in V(G)\setminus V(T)$ that is adjacent to $v$.
\end{itemize}
\end{observation}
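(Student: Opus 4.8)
The plan is to prove both implications by leveraging Observation~\ref{obs:ext}, which already characterizes plain extendability through conditions (i) and (ii), and then to isolate the single extra requirement that condition (iii) encodes, namely that the leaves of $T$ must acquire children. The structural fact I will rely on throughout is that, since $T$ is a subtree of $T'$ with both rooted at $r$ and $E(T)\subseteq E(T')$, the unique $r$-to-$v$ path in $T'$ coincides with the unique $r$-to-$v$ path in $T$ for every $v\in V(T)$ (the $T$-path is also a path in the tree $T'$, and tree paths are unique). In particular, every $T'$-ancestor of a vertex of $V(T)$ already lies in $V(T)$, and the $T'$-parent of any non-root vertex of $V(T)$ equals its $T$-parent.

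For the forward direction, I would start from an extension $T'$ in which all of $V(T)$ is internal. Conditions (i) and (ii) are immediate from Observation~\ref{obs:ext}, since $T$ is in particular extendable. For (iii), I fix a leaf $v$ of $T$; as $v$ is internal in $T'$ it has a child $w$ in $T'$. I then argue $w\notin V(T)$: otherwise $w$ is a non-root vertex of $V(T)$ whose $T'$-parent is $v$, so by the structural fact its $T$-parent is also $v$, making $v$ internal in $T$ and contradicting that $v$ is a leaf of $T$. Hence $w\in V(G)\setminus V(T)$ is the required neighbor of $v$.

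For the reverse direction, conditions (i) and (ii) together with Observation~\ref{obs:ext} already produce some extension $T'$ of $T$, so the only thing left is to show that condition (iii) forces every vertex of $V(T)$ to be internal in $T'$. Vertices that are already internal in $T$ keep their $T$-children in $T'$ and cause no trouble, so I only need to treat a leaf $v$ of $T$. Taking the neighbor $u\in V(G)\setminus V(T)$ guaranteed by (iii), I use that in the DFS tree $T'$ the edge $uv$ is either a tree edge or joins an ancestor--descendant pair; combined with the structural fact that no vertex outside $V(T)$ can be an ancestor of $v$, this pins $u$ down as a descendant (or child) of $v$, so $v$ has a descendant and is therefore internal in $T'$. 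Since this applies to every leaf of $T$, all of $V(T)$ is internal in $T'$.

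The main thing to get right --- and the step I expect to be the only real obstacle --- is the reverse direction, specifically the realization that no delicate construction is needed: one need not route an appropriate child to each leaf by hand, because the DFS ancestor--descendant dichotomy, together with the observation that all $T'$-ancestors of a $V(T)$-vertex stay inside $V(T)$, makes the neighbor supplied by (iii) automatically fall below $v$ in \emph{every} extension. I will also dispatch the degenerate case $v=r$ (that is, $V(T)=\{r\}$) separately, where ``$v$ is a leaf of $T$'' must be read as $r$ having no child and where (iii) merely asserts that $r$ has a neighbor, which already forces $r$ to be internal in any spanning tree of a connected graph on at least two vertices.
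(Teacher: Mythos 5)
Your proof is correct. The paper states Observation~\ref{obs:ext-int} without proof, treating it as an immediate variant of Observation~\ref{obs:ext}, and your argument supplies exactly the details that are implicitly intended: the key structural fact that the $r$-to-$v$ path in any extension $T'$ coincides with the one in $T$ (so no vertex outside $V(T)$ can be a $T'$-ancestor of a vertex of $T$), which makes the neighbor from (iii) a descendant of the leaf in every extension and settles both directions.
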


\begin{observation}\label{obs:ext-leaves}
Let $G$ be a graph with $r\in V(G)$ and let $T\subseteq G$ be a tree containing $r$. Then $T$ is extendable to a DFS tree rooted in $r$ with an extension $T'$ such that the vertices of $L=V(G)\setminus V(T)$ are leaves of $T'$ if and only if 
\begin{itemize}
\item[(i)] $T$ is a DFS tree rooted in $r$ of $G[V(T)]$,  
\item[(ii)] $L$ is an independent set,
\item[(iii)] for every $v\in L$,  the vertices of $N_G(v)$ are vertices of a root-to-leaf path of $T$.
\end{itemize}
\end{observation}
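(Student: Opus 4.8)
The plan is to prove both implications of the equivalence directly, in the same spirit as Observations~\ref{obs:ext} and~\ref{obs:ext-int}, while exploiting one structural simplification: condition~(ii) forces $G-V(T)=G[L]$ to be edgeless, so every connected component of $G-V(T)$ is a single vertex $\{v\}$ with $v\in L$ and $N_G(V(C))=N_G(v)$. This is exactly what aligns the three conditions here with conditions~(i)--(ii) of Observation~\ref{obs:ext}, and it is the fact I would isolate first.

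For the forward direction, I would assume $T$ has an extension $T'$ that is a DFS tree rooted in $r$ in which every vertex of $L$ is a leaf. Condition~(i) follows because $T$ is a subtree of $T'$ spanning $V(T)$: since $T$ and $T'[V(T)]$ are both trees on $V(T)$ with $T\subseteq T'$, we get $T=T'[V(T)]$, and a DFS tree restricted to a vertex subset that induces a subtree is again a DFS tree of the induced subgraph. For~(ii), note that an edge of $G$ between two vertices of $L$ would join two leaves of $T'$, which is impossible in a DFS tree (two leaves are neither in a tree-edge relation nor in an ancestor--descendant relation). For~(iii), fix $v\in L$; since $v$ is a leaf of $T'$ it has no descendants, so every neighbour of $v$ in $G$ is an ancestor of $v$ in $T'$, whence all neighbours lie on the path from $r$ to $v$. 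Moreover no ancestor of $v$ can lie in $L$ (an ancestor has a descendant and so is not a leaf), so $N_G(v)\subseteq V(T)$ and, using $T=T'[V(T)]$, these vertices lie on a root-to-leaf path of $T$.

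For the backward direction, I would construct the extension explicitly. Assuming~(i)--(iii), for each $v\in L$ condition~(iii) places $N_G(v)$ on a single root-to-leaf path of $T$, so its vertices are linearly ordered by depth and there is a well-defined deepest neighbour $p_v$; I would attach $v$ as a new leaf child of $p_v$. Because $L$ is independent, the added vertices are pairwise non-adjacent and each gets degree one, so the resulting $T'$ is a spanning tree of $G$ in which every $v\in L$ is a leaf and $T$ is a subtree. It then remains to check the DFS property of $T'$: edges inside $V(T)$ are handled by~(i) together with the fact that the ancestor relations of $T$ persist in $T'$, while for an edge $uv$ with $v\in L$ we have $u\in N_G(v)\subseteq V(T)$, and since $u$ lies no deeper than $p_v$ on the common root-to-leaf path, $u$ is an ancestor of $p_v$, hence of $v$, in $T'$.

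The routine parts are the two verifications of the DFS property. The one step needing care is the identity $T=T'[V(T)]$, which is needed to transport root-to-leaf paths between $T$ and $T'$ in the forward direction, and, symmetrically, the observation in the backward construction that every element of $N_G(v)$ becomes an ancestor of $v$ precisely because they sit on one root-to-leaf path with $p_v$ deepest. I expect the main conceptual obstacle to be nothing more than making the ``deepest neighbour'' attachment well-defined and confirming that it cannot create a neighbour of $v$ that is a descendant rather than an ancestor; everything else follows the template already established for Observation~\ref{obs:ext}.
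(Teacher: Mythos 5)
Your proof is correct and is exactly the verification the paper leaves implicit: Observation~\ref{obs:ext-leaves} is stated without proof as an immediate consequence of the definition of a DFS tree, and your two directions (deriving (i)--(iii) from the leaf/ancestor structure of $T'$, and conversely attaching each $v\in L$ below its deepest neighbour $p_v$) are the intended argument, with the identity $T=T'[V(T)]$ correctly isolated as the step that transports root-to-leaf paths between $T$ and $T'$. The only caveat, inherited from the statement rather than introduced by you, is that a vertex $v\in L$ with $N_G(v)=\emptyset$ satisfies (iii) vacuously yet admits no attachment point; this never arises in the paper since the observation is only invoked for connected graphs.
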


\noindent
Now, we are ready to describe our algorithms. For the proof of Lemma \ref{lem:XP}, see Appendix \ref{appendix:C}.

\begin{lemma}\label{lem:XP}
 \textsc{Dual Min-LLT} and \textsc{Dual Max-LLT} can be solved in $n^{\Oh(k)}$ time.  
 \end{lemma}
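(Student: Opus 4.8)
The plan is to solve both \textsc{Dual Min-LLT} and \textsc{Dual Max-LLT} by a brute-force search over the possible sets of internal vertices, using the extendability criteria from Observations~\ref{obs:ext-int} and~\ref{obs:ext-leaves} together with the structural bound from Lemma~\ref{upperB_internal}. Recall that \textsc{Dual Min-LLT} asks for a DFS tree with at least $k$ internal vertices, while \textsc{Dual Max-LLT} asks for a DFS tree with at most $n-k$ internal vertices (equivalently at least $k$ leaves). The key observation is that by Observation~\ref{obs:int} the set of internal vertices of any DFS tree is a vertex cover; moreover by Lemma~\ref{upperB_internal} a DFS tree with a vertex cover of the internal set of size $s$ has at most $2s$ internal vertices. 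Since we may assume $G$ is connected (a disconnected graph has no spanning tree and is a trivial no-instance), we iterate over the choice of root $r\in V(G)$, giving an $\Oh(n)$ factor.

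First I would handle \textsc{Dual Min-LLT}. The aim is to find a DFS tree with at least $k$ internal vertices. If such a tree exists, then (by taking any DFS tree with the smallest internal set that is still $\geq k$, or simply noting we only need a \emph{candidate} internal set) we may look for a subset $X\subseteq V(G)$ with $r\in X$, $|X|$ bounded appropriately, that serves as the internal set of a DFS tree. The clean approach is: for each root $r$, iterate over all candidate internal sets $X$ with $r \in X$ and $k \le |X| \le 2k$, where the upper bound $2k$ is justified because if a DFS tree has more than $2k$ internal vertices then in particular it has at least $k$, but more usefully we can always find a DFS tree whose internal set has size at most twice the minimum needed — but in fact it is cleaner simply to guess $X$ of size exactly some target value in the range $[k, n]$. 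To keep the count at $n^{\Oh(k)}$, the right move is to guess an internal set of size exactly $k$ that induces a tree $T$ on $X$, check via Observation~\ref{obs:ext-int} whether $T$ extends to a DFS tree in which all of $X$ is internal; there are $\binom{n}{k} = n^{\Oh(k)}$ choices of $X$, and for each we check in polynomial time whether $G[X]$ admits an ordering/tree satisfying conditions (i)--(iii). The subtlety is that conditions (i) and (ii) of Observation~\ref{obs:ext-int} are about $T$ being a DFS tree of $G[V(T)]$ and neighborhoods lying on root-to-leaf paths; by Observation~\ref{obs:order} we can also guess the \emph{ordering} of $X$ (at most $k!$ orderings, absorbed into $n^{\Oh(k)}$) to determine $T$ uniquely, then verify the three conditions directly.

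For \textsc{Dual Max-LLT}, the task is symmetric but phrased in terms of leaves: find a DFS tree with at least $k$ leaves. Here I would instead guess the structure around the internal vertices. The internal vertices form a vertex cover, but now we want \emph{many} leaves, i.e.\ a \emph{small} internal set is desirable, yet the constraint is only $\ge k$ leaves, so the internal set can be large. The approach is to guess the set $L$ of leaves of size exactly $k$ is too expensive directly, so instead guess the internal set $V(T)$; but its size can be large. The resolution, and the step I expect to be the main obstacle, is to control which object to guess so that the search stays within $n^{\Oh(k)}$. For \textsc{Dual Max-LLT} the natural move is to use Observation~\ref{obs:ext-leaves}: we want a set $L$ of $k$ vertices that can simultaneously be the leaves (with the complement forming the internal tree). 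Guessing $L$ of size exactly $k$ gives $\binom{n}{k}=n^{\Oh(k)}$ choices; for each, set $T$ to be spanned by $V(G)\setminus L$, verify $L$ is independent, and check conditions (i) and (iii) of Observation~\ref{obs:ext-leaves} — but now $V(T)=V(G)\setminus L$ may be large, so condition (i) (that $T$ is a DFS tree of $G[V(T)]$) is not free. This is exactly where the difficulty concentrates: I would argue that once the leaf set $L$ is fixed and $G-L$ is connected, a DFS tree of $G-L$ can be found greedily, and then the $N_G(v)$-on-a-path conditions for $v\in L$ are verified in polynomial time; if $G-L$ is disconnected one rules out that choice of $L$. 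Combining both cases, each guess is checked in polynomial time, the number of guesses is $n^{\Oh(k)}$, and the root loop contributes a further linear factor, yielding the claimed $n^{\Oh(k)}$ running time for both problems.
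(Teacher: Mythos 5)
Your treatment of \textsc{Dual Min-LLT} is essentially the paper's proof: guess a $k$-subset $S$ containing the root together with an ordering (equivalently, a $k$-tuple), use Observation~\ref{obs:order} to obtain the unique candidate subtree $T'$ of $G[S]$, and test extendability with all of $S$ internal via Observation~\ref{obs:ext-int}; this gives $n^{\Oh(k)}$ time. That half is fine.

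The \textsc{Dual Max-LLT} half has a genuine gap, and it starts with a misreading of the problem. \textsc{Dual Max-LLT} asks for a DFS tree with at least $n-k$ leaves, i.e.\ at most $k$ internal vertices --- not ``at least $k$ leaves'' as you state (that is \textsc{Max-LLT} itself, which is W[1]-hard). Because of this, the object you should be guessing is small: a set $S$ of exactly $k$ vertices containing the root and all internal vertices of the solution, together with an ordering determining $T'=T[S]$, followed by the test of Observation~\ref{obs:ext-leaves} with $L=V(G)\setminus S$. Instead you guess a leaf set $L$ of size $k$ and let $T$ be ``a DFS tree of $G-L$ found greedily.'' This fails for two reasons. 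First, it addresses the wrong problem. Second, even for the problem you are solving, conditions (i) and (iii) of Observation~\ref{obs:ext-leaves} depend on \emph{which} DFS tree of $G[V(G)\setminus L]$ you choose: whether $N_G(v)$ lies on a root-to-leaf path is a property of the particular tree, different DFS trees of $G-L$ have different root-to-leaf path structures, and $G-L$ has $n-k$ vertices, so you can neither enumerate its DFS trees within $n^{\Oh(k)}$ time nor justify that an arbitrary greedy choice works when some other choice would. The paper sidesteps this entirely by only ever guessing (ordered) vertex sets of size $k$, so that the candidate subtree $T'$ is uniquely determined by the guess and all verification is polynomial.
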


\noindent
Combining Lemma~\ref{lem:XP} and Theorem~\ref{thm:main:dual} implies Theorem~\ref{thm:main:FPT} by providing $k^{\Oh(k)}\cdot n^{\Oh(1)}$ time algorithms for the dual problems.


\section{Conclusion}
\label{concln}
We have shown that \textsc{Dual Min-LLT} and \textsc{Dual Max-LLT} admit kernels with $\Oh(k^3)$ vertices and 
can be solved in $k^{\Oh(k)}\cdot n^{\Oh(1)}$ time.   
A natural question is whether the problems have linear kernels, such as for $k$-\textsc{Internal Spanning Tree}~\cite{wenjunLi2017}.
Another question is whether the problems can be solved by  single-exponential FPT algorithms.

As a byproduct of our kernelization algorithms for  \textsc{Dual Min-LLT} and \textsc{Dual Max-LLT}, we also proved that  \textsc{Min-LLT} and \textsc{Max-LLT} admit polynomial kernels for the structural parameterization by the vertex cover number. It is natural to wonder whether polynomial kernels exist for other structural parameterizations. In particular, it could be interesting to consider the parameterization by the \emph{feedback vertex} number, i.e., by the minimum size of a vertex set $X$ such that $G-X$ is a forest. 

\subsubsection{Acknowledgements} We acknowledge support from the Research Council of Norway grant ``Parameterized Complexity for Practical Computing (PCPC)'' (NFR, no. 274526) and ``Beyond Worst-Case Analysis in Algorithms (BWCA)'' (NFR, no. 314528).

\bibliographystyle{splncs04}
\bibliography{references}

\appendix
\newcommand{\qedclaim}{\hfill $\blacklozenge$}
\section {Proof of Claim~\ref{cl:aux} in the Proof of Lemma \ref{lem:vc}} \label{appendix:A}
\begin{proof}
We show (i) by contradiction. Assume that there are $u,v\in N_G(x)$ such that the lowest common ancestor $w$ of these vertices is distinct from $u$ and $v$. Because $x$ is not labeled by Rule~\ref{dual_min_R2}, $|W_{uv}|>2s$. Hence, by Lemma~\ref{upperB_internal}, there is a vertex $z\in W_{uv}$ such that $z$ is a leaf of $T$. However, any leaf in a DFS tree of $T$ can be adjacent only to its ancestors in $T$. This contradiction proves the claim.

We use exactly the same arguments to prove (ii) by replacing $T$ by $T'$ and observing that $S$ is a vertex cover of $G'$.

To show (iii), let $T'$ be a DFS tree with a root $r$. By (ii), there is a leaf $y$ such that the vertices of $N_G(x)$ are vertices of the $(r,y)$-path in $T'$. Observe that $y$ may be not unique. We prove that $y\notin N_G(x)$. For the sake of contradiction, assume that $x$ and $y$ are adjacent. Because $d_G(x)\geq 2$, $x$ has a neighbor $u\neq x$. Because $x$ is not labeled by Rule~\ref{dual_min_R2}, $|W_{uy}|>2s$. By Lemma~\ref{upperB_internal}, we obtain that there is $v\in W_{uy}$ that is a leaf of $T'$. We have that $vy\in E(G')$ but two leaves of a DFS tree cannot be adjacent; a contradiction. This proves that $y\notin N_G(x)$ and concludes the proof of the claim.  
\qedclaim
\end{proof}
\section{Proof of Theorem \ref{thm:dual-max}} \label{appendix:B}
\begin{proof}
The aim of \textsc{Dual Max-LLT} is to decide, given a graph $G$ and an integer $k\geq 0$, whether $G$ has a DFS tree with  at least $n-k$ leaves. This is equivalent to asking whether $G$ has a DFS tree with at most $k$ internal vertices. Let $(G,k)$ be an instance of \textsc{Dual Max-LLT}. If $G$ is disconnected, then $(G,k)$ is a no-instance, and we return a trivial no-instance of \textsc{Dual Max-DLL}  of constant size. From now, we assume that $G$ is connected.

If $T$ is a DFS tree, then the set of internal vertices of $T$ is a vertex cover of $G$ by Observation~\ref{obs:int}. Hence, if $G$ has a DFS tree with at most $k$ internal vertices, then $\tau(G)\leq k$. We approximate $\tau(G)$ by selecting greedily an inclusion-maximal matching $M$ in $G$ (see, e.g.,~\cite{cyganPC}). If $|M|>k$, then we conclude that $\tau(G)>k$ and return a trivial no-instance of \textsc{Dual Max-DLL}  of constant size. 
Assume that this is not the case. Then we take $S$ as the set of endpoints of the edges of $M$ and observe that $S$ is a vertex cover of size at most $2k$. We call the algorithm from Lemma~\ref{lem:vc} for $G$ and $S$, which outputs a graph $G'$ with $\Oh(k^3)$ vertices. The kernelization algorithm returns the instance $(G',k)$ of \textsc{Dual Max-DLL} and stops.

To see the correctness, note that by Lemma~\ref{lem:vc}, for any nonnegative integer $t\leq k$, $G$ has a DFS tree with $t$ internal vertices if and only if $G'$ has a DFS tree with $t$ internal vertices.
Because $M$ can be constructed in linear time and the algorithm from Lemma~\ref{lem:vc} is polynomial, the overall running time is polynomial. This completes the proof.
\qed
\end{proof}

\section{Proof of Lemma \ref{lem:XP} in Section \ref{sectn:fpt_algorithms}}
\label{appendix:C}
\begin{proof}
First, we give an algorithm for \textsc{Dual Min-LLT}. 
Let $(G,k)$ be an instance of the problem. If $G$ is disconnected, then $(G,k)$ is a no-instance. Assume that this is not the case. Also, we have a trivial no-instance if $n\leq k$ and we assume that $n\geq k$.

Recall that the equivalent task of \textsc{Dual Min-LLT} is to decide, given a graph $G$ and an integer $k$, whether $G$ has a DFS tree with at least $k$ internal vertices. 
We guess a set $S$ of $k$ internal vertices containing a root of a solution DFS tree $T$ forming a subtree $T'=T[S]$. To guess $T'$ and $S$, we apply Observation~\ref{obs:order} using the fact that $T'$ should be a DFS tree of $G[S]$.
Formally, we consider all $k$-tuples $(v_1,\ldots,v_k)$ of distinct vertices of $G$. For each $k$-tuple, we check whether there is a DFS tree $T'$ of $G[S]$, where $S=\{v_1,\ldots,v_k\}$, respecting the ordering $v_1,\ldots,v_k$ using Observation~\ref{obs:order}.  If such a tree $T'$ exists, we use Observation~\ref{obs:ext-int} to check  whether $T'$ has an extension $T$ such that the vertices of $S$ are internal vertices of $T$. If we find such a $k$-tuple, we conclude that 
$(G,k)$ is a yes-instance of   \textsc{Dual Min-LLT}. Otherwise, if we fail to find $T'$ and a required extension for all $k$-tuples, we conclude that   $(G,k)$ is a no-instance of   \textsc{Dual Min-LLT}.
The correctness of the algorithm immediately follows from Observations~\ref{obs:order} and \ref{obs:ext-int}. Because we have at most $n^k$ $k$-tuples of vertices, we obtain that the overall running time is $n^{\Oh(k)}$.

We use a similar strategy for \textsc{Dual Max-LLT}. Recall that now the task is to decide whether a graph $G$ has a DFS tree with at most $k$ internal vertices. 
Let $(G,k)$ be an instance of the problem. As above, we can assume that $G$ is connected. Also, if $n\leq k$, then $(G,k)$ is a yes-instance and we can assume that $n>k$.
We guess a set $S$ of $k$  vertices containing a root and the internal vertices of a solution DFS tree $T$ and a subtree $T'=T[S]$.  For this, we consider all 
$k$-tuples $(v_1,\ldots,v_k)$ of distinct vertices of $G$. For each $k$-tuple, we check whether there is a DFS tree $T'$ of $G[S]$, where $S=\{v_1,\ldots,v_k\}$, respecting the ordering $v_1,\ldots,v_k$ using Observation~\ref{obs:order}.  If such a tree $T'$ exists, we use Observation~\ref{obs:ext-leaves} to check whether $T'$ has an extension $T$ such that the vertices of $V(G)\setminus S$ are leaves of $T$. If we find such a $k$-tuple, we conclude that 
$(G,k)$ is a yes-instance of   \textsc{Dual Max-LLT}. Otherwise, if we fail to find $T'$ and a required extension for all $k$-tuples, we conclude that   $(G,k)$ is a no-instance of   \textsc{Dual Max-LLT}.
Observations~\ref{obs:order} and \ref{obs:ext-leaves} imply correctness, and  the overall running time is $n^{\Oh(k)}$.
This concludes the proof.
\qed
\end{proof}
\end{document}